\theoremstyle{thmstyleone}%
\newtheorem{theorem}{Theorem}
\newtheorem{lemma}[theorem]{lemma}%
\newtheorem{definition}{Definition}%
\newtheorem{example}{Example}%
\newcommand{\CC}{{\mathbf{C}}}
\newcommand{\RR}{{\mathbf{R}}}
\newcommand{\PP}{{\mathbf{P}}}
\newcommand{\FF}{{\mathbf{F}}}
\newcommand{\ZZ}{{\mathbf Z}}
\newcommand{\CCP}{{\CC\PP}}
\newcommand{\Torus}{{\mathbf{T}}}
\newcommand{\omitt}[1]{}
\title{Haldane Bundles: A Dataset for Learning to Predict the Chern Number of Line Bundles on the Torus}
\author{%
  Cody Tipton$^1$\thanks{This work was performed during an internship at Pacific Northwest National Laboratory.}, Elizabeth Coda$^2$, Davis Brown$^2$, Alyson Bittner$^2$, Jung Lee$^2$,\\ \textbf{Grayson Jorgenson$^2$, Tegan Emerson$^{2,3,4}$, Henry Kvinge$^{1,2}$} \\
$^1$Department of Mathematics, University of Washington\\
$^2$Pacific Northwest National Laboratory\\
$^3$Department of Mathematics, Colorado State University\\
$^4$Department of Mathematical Sciences, University of Texas\\
\texttt{cat1184@uw.edu}, \texttt{\{first\}.\{last\}@pnnl.gov}
}
\begin{document}

\maketitle

\begin{abstract}
  Characteristic classes, which are abstract topological invariants associated with vector bundles, have become an important notion in modern physics with surprising real-world consequences. As a representative example, the incredible properties of topological insulators, which are insulators in their bulk but conductors on their surface, can be completely characterized by a specific characteristic class associated with their electronic band structure, the first Chern class. Given their importance to next generation computing and the computational challenge of calculating them using first-principles approaches, there is a need to develop machine learning approaches to predict the characteristic classes associated with a material system. To aid in this program we introduce the {\emph{Haldane bundle dataset}}, which consists of synthetically generated complex line bundles on the $2$-torus. We envision this dataset, which is not as challenging as noisy and sparsely measured real-world datasets but (as we show) still difficult for off-the-shelf architectures, to be a testing ground for architectures that incorporate the rich topological and geometric priors underlying characteristic classes. 
\end{abstract}

\section{Introduction}
\label{sect:intro}

As a family of fundamental topological invariants of vector bundles, characteristic classes have long been a central tool in differential topology \cite{milnor1974characteristic}. As is often the case, it was only after they had become an established tool in mathematics that their importance to physics was recognized in, for example, Chern-Simons theory \cite{witten1989quantum,freed1992classical}. Another surprising appearance of characteristic classes is in topological materials, where they can be associated with the electronic band structure of a crystal lattice. Remarkably, non-trivial topology in this setting often has dramatic physical consequences. For example, topological insulators are topologically protected insulators in their interior but conductors on their surface (or edge) \cite{kotetes2019topological}. 

Because materials with topological properties are of intense interest in applications (e.g., next generation computing \cite{legg2022giant} and advanced sensors \cite{liu2014surrounding}), substantial effort has gone into finding them. While this hunt has seen some successes, first-principles approaches to calculating the topological characteristics of a material and then experimentally validating these predictions remain time-, labor-, and resource-intensive. There has thus been interest in leveraging machine learning as a cheap way to predict the topological properties of a material. Most approaches thus far have relied on off-the-shelf architectures and generic data-processing procedures that do not take advantage of the unique data-type that is a vector bundle. However, the problem of predicting the characteristic classes of a material system is well-aligned with current research directions in the field of geometric deep learning (e.g., invariance to Lie group actions). 

We believe that part of the challenge of developing methods of learning from vector bundles arises from the fact that there are no benchmark datasets where the bundle structure is actually exposed. Indeed, the features in large topological materials databases \cite{bradlyn2017topological,marrazzo2019relative} consist solely in the geometric structure and constituent atoms in the crystal lattice with associated atomic characteristics. While topological labels to these datasets ultimately depend on an underlying vector bundle structure, this is several layers beneath the features that are given. To create a dataset where researchers can develop architectures for learning on vector bundles, we introduce the Haldane Bundle dataset, a dataset of complex line bundles on the $2$-torus with labeled first Chern number. Our process of generating valid random line bundles, which is mathematically non-trivial, is inspired by the Haldane model \cite{haldane1988model}, an influential toy model for the anomalous quantum Hall effect. Our contributions in this paper include the following.
\begin{itemize}
\item A description of a method of sampling random line bundles on the $2$-torus and our approach to efficiently calculating the Chern number of these line bundles.
\item Summary statistics of the resulting Haldane Bundles dataset as well as off-the-shelf model performance.
\item A discussion of some of the geometric and topological properties that a model designed for this task should have.
\end{itemize}
We hope that our work will begin the process of building a connection between the geometric deep learning community and scientists studying topological materials. Code to generate our datasets can be found at \url{https://github.com/shadtome/haldane-bundles}.


\section{Related Work}
\label{sect:related-work}

There has been significant interest in using machine learning approaches for materials design and discovery \cite{gubernatis2018machine}. In the realm of topological materials, several works have looked at predicting the topological properties of a material. For example, \cite{claussen2020detection} and \cite{ schleder2021machine} both use tree-based methods that take as input a mix of categorical and numeric features like atomic properties. On the other hand, \cite{zhang2018machine} look at the simple case of predicting topological properties of $1$-dimensional insulators directly from a Hamiltonian. This amounts to calculating the winding number. An off-the-shelf convolutional neural network is used. Finally, \cite{peano2021rapid} try to learn to predict the Hamiltonian which they then diagonalize to get the band structure. None of these papers learn directly from the underlying vector bundles and thus are not able to support development of models that incorporate the mathematical constructions beneath the topological properties of materials. 

While vector bundles are a fundamental concept in both mathematics and physics, they have seen limited use in machine learning. Notable exceptions include \cite{scoccola2023fibered}, where a vector bundle framework is used for dimensionality reduction, \cite{courts2021bundle,coda2022fiber} where a more general fiber bundle framework is used to model many-to-one and many-to-many maps respective, and \cite{kvinge2022neural} where frames in a subbundle of the tangent bundle of data manifolds are used to better understand the learning process in computer vision models.

\section{A Lightening Tour of Vector Bundles, Characteristic Classes, and the Haldane Model}
\label{sect:vector-bundles}


\subsection{Vector bundles}
    
    Vector bundles were introduced to provide a systematic way of studying tangent vectors, differential forms, and other geometric constructions on manifolds. Roughly speaking, a continuous vector bundle is a collection of vector spaces, one attached to each point on a manifold, which vary continuously. The standard example is the tangent bundle of a smooth manifold, which captures the space of tangent vectors as they vary from point to point. More formally, a  {\emph{vector bundle of rank $n$}} on a manifold $M$ is a space $E$ with a surjective map (the bundle {\emph{projection}}) $\pi:E \rightarrow M$ such that for any $x \in M$, there is a neighborhood $U \subseteq M$ and a homeomorphism $\phi: \pi^{-1}(U) \rightarrow U \times \mathbb{R}^n$. $\pi$ further satisfies a commutative diagram that we omit here but that can be informally interpreted as saying that $\pi$ maps $U \times \mathbb{R}^n$ to $U$. That is, $\pi$ collapses all the points in the vector space $\mathbb{R}^n$. One consequence of this definition is that for any $x \in M$, $\pi^{-1}(x) = \mathbb{R}^n$. This space is known as the {\emph{fiber}} at $x$. Vector bundles can be defined for both fields of $\RR$ and fields over $\CC$ and this choice has a substantial impact on the structure and properties of the vector bundle.

    Vector bundles can have non-trivial topology introduced by ``twists'' in the way that the vector spaces (sitting above points in the manifold) change as one moves around the manifold. Line bundles on the circle provide a visualization of this phenomenon (see Figure \ref{fig: circle vector bundles}). In both the left and right example, lines ($1$-dimensional vector spaces) are assigned to each point on the circle $S^1$ (shown in the bottom of the figure). In both cases, if one only looks at the points along a small interval of $(\alpha,\beta) \subset S^1$ (thought as an open subset of angles between $\alpha$ and $\beta$) along with their associated lines, the resulting space is topologically equivalent to $(\alpha,\beta) \times \mathbb{R}$. That is, the vector bundle is trivial locally. But looking at all of $S^1$ one sees that in the example on the right (which is a M\"{o}bius band), there is a twist such that if we start at a non-zero point on a fiber and then travel around the circle and come back to the same fiber, we will find ourself at a different point in that fiber. This shows that it is impossible to define a continuous function from the whole circle to the M\"{o}bius bundle that does not have any zeros.

    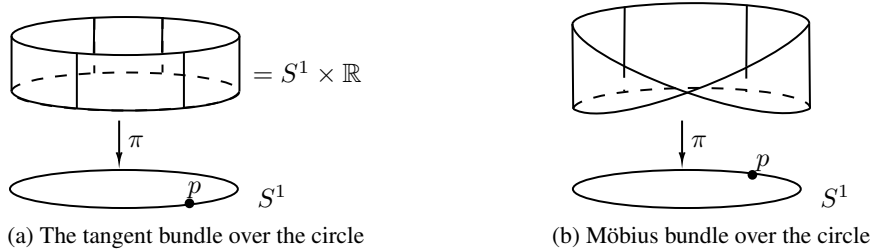
\begin{figure}[h!]
        \centering

\begin{subfigure}{0.5\textwidth}
\centering
\tikzset{every picture/.style={line width=0.75pt}} 

\begin{tikzpicture}[x=0.3pt,y=0.3pt,yscale=-1,xscale=1]

\draw   (152.67,73) .. controls (152.67,59.75) and (216.99,49) .. (296.33,49) .. controls (375.68,49) and (440,59.75) .. (440,73) .. controls (440,86.25) and (375.68,97) .. (296.33,97) .. controls (216.99,97) and (152.67,86.25) .. (152.67,73) -- cycle ;
\draw  [dash pattern={on 4.5pt off 4.5pt}] (152.67,143) .. controls (152.67,129.75) and (216.99,119) .. (296.33,119) .. controls (375.68,119) and (440,129.75) .. (440,143) .. controls (440,156.25) and (375.68,167) .. (296.33,167) .. controls (216.99,167) and (152.67,156.25) .. (152.67,143) -- cycle ;
\draw    (152.67,73) -- (152.67,143) ;
\draw    (440,73) -- (440,143) ;
\draw   (150.67,266) .. controls (150.67,252.75) and (214.99,242) .. (294.33,242) .. controls (373.68,242) and (438,252.75) .. (438,266) .. controls (438,279.25) and (373.68,290) .. (294.33,290) .. controls (214.99,290) and (150.67,279.25) .. (150.67,266) -- cycle ;
\draw    (288.33,179.67) -- (288.33,228.67) ;
\draw [shift={(288.33,230.67)}, rotate = 270] [color={rgb, 255:red, 0; green, 0; blue, 0 }  ][line width=0.75]    (10.93,-3.29) .. controls (6.95,-1.4) and (3.31,-0.3) .. (0,0) .. controls (3.31,0.3) and (6.95,1.4) .. (10.93,3.29)   ;
\draw  [fill={rgb, 255:red, 0; green, 0; blue, 0 }  ,fill opacity=1 ] (372.17,284.17) .. controls (372.17,281.5) and (374.33,279.33) .. (377,279.33) .. controls (379.67,279.33) and (381.83,281.5) .. (381.83,284.17) .. controls (381.83,286.84) and (379.67,289) .. (377,289) .. controls (374.33,289) and (372.17,286.84) .. (372.17,284.17) -- cycle ;
\draw    (370,94) -- (370,164) ;
\draw    (235,94) -- (235,164) ;
\draw  [dash pattern={on 4.5pt off 4.5pt}]  (257,50) -- (257,120) ;
\draw  [dash pattern={on 4.5pt off 4.5pt}]  (343,49) -- (343,119) ;
\draw    (152.67,143) .. controls (165.33,173) and (423.33,177) .. (440,143) ;
\draw    (257,50) -- (256.33,95) ;
\draw    (343,49) -- (342.33,94) ;

\draw (296,195.4) node [anchor=north west][inner sep=0.75pt]    {$\pi $};
\draw (459,257.4) node [anchor=north west][inner sep=0.75pt]    {$S^{1}$};
\draw (453,100.4) node [anchor=north west][inner sep=0.75pt]    {$=S^{1} \times \mathbb{R}$};
\draw (371,252.4) node [anchor=north west][inner sep=0.75pt]    {$p$};

\end{tikzpicture}

        \caption{The tangent bundle over the circle}
        \label{fig:tangent}
\end{subfigure}%
\begin{subfigure}{0.5\textwidth}
\centering

\tikzset{every picture/.style={line width=0.75pt}} 

\begin{tikzpicture}[x=0.3pt,y=0.3pt,yscale=-1,xscale=1]

\draw  [draw opacity=0] (140.07,52.8) .. controls (142.42,38.7) and (208.24,27) .. (289.17,26.49) .. controls (371.64,25.97) and (438.57,37.27) .. (438.66,51.73) .. controls (438.66,51.99) and (438.64,52.25) .. (438.6,52.51) -- (289.33,52.67) -- cycle ; \draw   (140.07,52.8) .. controls (142.42,38.7) and (208.24,27) .. (289.17,26.49) .. controls (371.64,25.97) and (438.57,37.27) .. (438.66,51.73) .. controls (438.66,51.99) and (438.64,52.25) .. (438.6,52.51) ;  
\draw  [draw opacity=0][dash pattern={on 4.5pt off 4.5pt}] (141.07,159.8) .. controls (143.42,145.7) and (209.24,134) .. (290.17,133.49) .. controls (372.64,132.97) and (439.57,144.27) .. (439.66,158.73) .. controls (439.66,158.99) and (439.64,159.25) .. (439.6,159.51) -- (290.33,159.67) -- cycle ; \draw  [dash pattern={on 4.5pt off 4.5pt}] (141.07,159.8) .. controls (143.42,145.7) and (209.24,134) .. (290.17,133.49) .. controls (372.64,132.97) and (439.57,144.27) .. (439.66,158.73) .. controls (439.66,158.99) and (439.64,159.25) .. (439.6,159.51) ;  
\draw    (141.07,159.8) .. controls (189.33,189.67) and (409.33,99.67) .. (438.6,52.51) ;
\draw    (140.07,52.8) .. controls (141.33,94.67) and (379.33,189.67) .. (439.6,159.51) ;
\draw    (140.07,52.8) -- (141.07,159.8) ;
\draw    (438.6,52.51) -- (439.6,159.51) ;
\draw    (206,30.67) -- (205.33,105.67) ;
\draw  [dash pattern={on 4.5pt off 4.5pt}]  (205.33,105.67) -- (205.33,137.67) ;
\draw    (360,28.67) -- (359.33,106.67) ;
\draw  [dash pattern={on 4.5pt off 4.5pt}]  (359.33,106.67) -- (359.33,138.67) ;
\draw   (140.67,261) .. controls (140.67,247.75) and (204.99,237) .. (284.33,237) .. controls (363.68,237) and (428,247.75) .. (428,261) .. controls (428,274.25) and (363.68,285) .. (284.33,285) .. controls (204.99,285) and (140.67,274.25) .. (140.67,261) -- cycle ;
\draw    (278.33,174.67) -- (278.33,223.67) ;
\draw [shift={(278.33,225.67)}, rotate = 270] [color={rgb, 255:red, 0; green, 0; blue, 0 }  ][line width=0.75]    (10.93,-3.29) .. controls (6.95,-1.4) and (3.31,-0.3) .. (0,0) .. controls (3.31,0.3) and (6.95,1.4) .. (10.93,3.29)   ;
\draw  [fill={rgb, 255:red, 0; green, 0; blue, 0 }  ,fill opacity=1 ] (362.17,243.17) .. controls (362.17,240.5) and (364.33,238.33) .. (367,238.33) .. controls (369.67,238.33) and (371.83,240.5) .. (371.83,243.17) .. controls (371.83,245.84) and (369.67,248) .. (367,248) .. controls (364.33,248) and (362.17,245.84) .. (362.17,243.17) -- cycle ;

\draw (286,190.4) node [anchor=north west][inner sep=0.75pt]    {$\pi $};
\draw (449,252.4) node [anchor=north west][inner sep=0.75pt]    {$S^{1}$};
\draw (368,213.4) node [anchor=north west][inner sep=0.75pt]    {$p$};

\end{tikzpicture}

        \caption{M\"{o}bius bundle over the circle}
        \label{fig:mobius}
\end{subfigure}
    \caption{There are only two non-isomorphic line bundles over the circle. The left is trivial, the right is not.}
    \label{fig: circle vector bundles}
    \end{figure}
\omitt{
    \begin{figure}[h!]
        \centering

\tikzset{every picture/.style={line width=0.75pt}} 

\begin{tikzpicture}[x=0.4pt,y=0.4pt,yscale=-1,xscale=1]

\draw   (172.67,72) .. controls (172.67,58.75) and (236.99,48) .. (316.33,48) .. controls (395.68,48) and (460,58.75) .. (460,72) .. controls (460,85.25) and (395.68,96) .. (316.33,96) .. controls (236.99,96) and (172.67,85.25) .. (172.67,72) -- cycle ;
\draw  [dash pattern={on 4.5pt off 4.5pt}] (172.67,142) .. controls (172.67,128.75) and (236.99,118) .. (316.33,118) .. controls (395.68,118) and (460,128.75) .. (460,142) .. controls (460,155.25) and (395.68,166) .. (316.33,166) .. controls (236.99,166) and (172.67,155.25) .. (172.67,142) -- cycle ;
\draw    (172.67,72) -- (172.67,142) ;
\draw    (460,72) -- (460,142) ;
\draw   (170.67,265) .. controls (170.67,251.75) and (234.99,241) .. (314.33,241) .. controls (393.68,241) and (458,251.75) .. (458,265) .. controls (458,278.25) and (393.68,289) .. (314.33,289) .. controls (234.99,289) and (170.67,278.25) .. (170.67,265) -- cycle ;
\draw    (308.33,178.67) -- (308.33,227.67) ;
\draw [shift={(308.33,229.67)}, rotate = 270] [color={rgb, 255:red, 0; green, 0; blue, 0 }  ][line width=0.75]    (10.93,-3.29) .. controls (6.95,-1.4) and (3.31,-0.3) .. (0,0) .. controls (3.31,0.3) and (6.95,1.4) .. (10.93,3.29)   ;
\draw  [fill={rgb, 255:red, 0; green, 0; blue, 0 }  ,fill opacity=1 ] (392.17,283.17) .. controls (392.17,280.5) and (394.33,278.33) .. (397,278.33) .. controls (399.67,278.33) and (401.83,280.5) .. (401.83,283.17) .. controls (401.83,285.84) and (399.67,288) .. (397,288) .. controls (394.33,288) and (392.17,285.84) .. (392.17,283.17) -- cycle ;
\draw    (390,93) -- (390,163) ;
\draw    (255,93) -- (255,163) ;
\draw  [dash pattern={on 4.5pt off 4.5pt}]  (277,49) -- (277,119) ;
\draw  [dash pattern={on 4.5pt off 4.5pt}]  (363,48) -- (363,118) ;
\draw    (172.67,142) .. controls (185.33,172) and (443.33,176) .. (460,142) ;
\draw    (277,49) -- (276.33,94) ;
\draw    (363,48) -- (362.33,93) ;
\draw  [fill={rgb, 255:red, 0; green, 0; blue, 0 }  ,fill opacity=1 ] (236.17,285.17) .. controls (236.17,282.5) and (238.33,280.33) .. (241,280.33) .. controls (243.67,280.33) and (245.83,282.5) .. (245.83,285.17) .. controls (245.83,287.84) and (243.67,290) .. (241,290) .. controls (238.33,290) and (236.17,287.84) .. (236.17,285.17) -- cycle ;
\draw [color={rgb, 255:red, 74; green, 144; blue, 226 }  ,draw opacity=1 ]   (256,139) .. controls (276.33,87.67) and (354.33,193.67) .. (390.33,107.67) ;
\draw  [fill={rgb, 255:red, 0; green, 0; blue, 0 }  ,fill opacity=1 ] (251.17,139) .. controls (251.17,136.33) and (253.33,134.17) .. (256,134.17) .. controls (258.67,134.17) and (260.83,136.33) .. (260.83,139) .. controls (260.83,141.67) and (258.67,143.83) .. (256,143.83) .. controls (253.33,143.83) and (251.17,141.67) .. (251.17,139) -- cycle ;
\draw  [fill={rgb, 255:red, 0; green, 0; blue, 0 }  ,fill opacity=1 ] (385.5,112.5) .. controls (385.5,109.83) and (387.66,107.67) .. (390.33,107.67) .. controls (393,107.67) and (395.17,109.83) .. (395.17,112.5) .. controls (395.17,115.17) and (393,117.33) .. (390.33,117.33) .. controls (387.66,117.33) and (385.5,115.17) .. (385.5,112.5) -- cycle ;
\draw    (487,241) .. controls (526.6,211.3) and (537.12,164.61) .. (477.17,141.36) ;
\draw [shift={(475.33,140.67)}, rotate = 20.35] [color={rgb, 255:red, 0; green, 0; blue, 0 }  ][line width=0.75]    (10.93,-3.29) .. controls (6.95,-1.4) and (3.31,-0.3) .. (0,0) .. controls (3.31,0.3) and (6.95,1.4) .. (10.93,3.29)   ;
\draw [color={rgb, 255:red, 74; green, 144; blue, 226 }  ,draw opacity=1 ]   (241,285.17) .. controls (290.33,289) and (355.33,292) .. (397,283.17) ;

\draw (316,194.4) node [anchor=north west][inner sep=0.75pt]    {$\pi $};
\draw (479,256.4) node [anchor=north west][inner sep=0.75pt]    {$S^{1}$};
\draw (473,99.4) node [anchor=north west][inner sep=0.75pt]    {$=S^{1} \times \mathbb{R}$};
\draw (391,251.4) node [anchor=north west][inner sep=0.75pt]    {$p$};
\draw (236,253.4) node [anchor=north west][inner sep=0.75pt]    {$q$};
\draw (529,178.4) node [anchor=north west][inner sep=0.75pt]    {$\sigma $};

\end{tikzpicture}

        \caption{Section of a line bundle}
        \label{fig:section}
    \end{figure}
} 

    \subsection{Characteristic classes}
    The example of the cylinder and the M\"{o}bius band in Figure \ref{fig: circle vector bundles} suggests that the extent to which vector bundles twist may be a useful global statistic that can be used to describe a bundle. This and similar notions of vector bundle shape are formalized in the concept of a characteristic class. Characteristic classes are topological invariants that capture global statistics of a vector bundle. They appear in a number of guises and are a fundamental tool in algebraic topology, algebraic geometry, differential geometry, and mathematical physics \cite{milnor1974characteristic}. For example, the Stiefel-Whitney class is a characteristic class of real vector bundles that takes values in $\ZZ/2\ZZ$ and detects whether a vector bundle is orientable. Hence, the cylinder (Figure \ref{fig: circle vector bundles}, left) has trivial Stiefel-Whitney class since it is orientable and the M\"{o}bius band has Stiefel-Whiteny class $\overline{1}$ since it is not orientable. The complex vector bundles in the Haldane Bundles dataset are labeled by a different characteristic called the first Chern class $c_1^\#$ which takes values in $\ZZ$ and detects how far a bundle is twisted from being trivial. 
    
    Even though they are a topological and not a geometric construction, the Chern class of a vector bundle can be calculated via an integral (over the manifold) of a specific differential form $c_1$ called the curvature form. In particular, if $L$ is a line bundle, then 
    \begin{align*}
        c_1^{\#}(L) = \int_M c_1(L).
    \end{align*}
    One method of calculating the Chern number requires one to find a connection of the line bundle and compute its curvature.  For more information about this see, \cite{Chern}.
    
    Finally, it is important to note that a vector bundle is always trivial if it can be defined consistently with a global set of coordinates. This explains why in Section \ref{sect: construction}, we define line bundles in local patches which we then glue together. If we did not, every vector bundle we defined would be trivial (equivalent to $M \times \RR$).
    
    \subsection{The Haldane Model}
    \label{sect:haldane-model}
    
    Our Haldane Bundles dataset is inspired by the Haldane model \cite{haldane1988model}, which is a $2$-dimensional toy model of a Chern insulator (a type of topological insulator whose properties arise from its electronic band structure having non-trivial Chern number). This simplified quantum system is defined over a honeycomb lattice (reminiscent of graphene). Due to the periodic nature of the lattice and Bloch's Theorem, the momentum space for this system can be viewed as sitting on the $2$-torus, $\Torus^2$. Thus, the Hamiltonian is a function of points on $\Torus^2$, $H: \Torus^2 \rightarrow \text{Mat}(\mathbf{C})_{2,2}$. In particular, for $p \in \Torus^2$,
    \begin{equation}\label{eq:haldane_hamiltonian}
        H(p) = \begin{bmatrix}
        G(p) & \overline{F}(p)\\
        F(p) & -G(p)
        \end{bmatrix}
    \end{equation}
    where $F(p) =t_1\sum_{j=1}^3e^{a_i\cdot p}$, $G(p) = M + 2t_2\sum_{j=1}^3\sin(b_i\cdot p)$, 
     $a_i$ are the vectors forming the honeycomb lattice, and the $b_i$ are the second neighbor hopping vectors.
    One can build a $\CC$-line bundle on $\Torus^2$ associated to $H$ by looking at one of the eigenvectors of $H$. Note that besides $p$, several other parameters determine $H$ including $M$, $t_1$, and $t_2$. In the case where $|t_2/M|<\frac{1}{3\sqrt{3}}$, the Chern number is zero.  On the other hand, when $|t_2/M|>\frac{1}{3\sqrt{3}}$, we obtain non-zero Chern numbers. In the latter case we get a model of a topological insulator.
    
    The Haldane model is an attractive starting point for development of methods of learning to predict Chern numbers from line bundles because different line bundles can be explicitly generated by varying $M$, $t_1$, and $t_2$ and based on these choices we know immediately what the Chern number is. Unfortunately, the resulting dataset is too simple for development of generalizable models. Indeed, because of the regularity in the resulting eigenvectors, a model can easily memorize predictive features without any generalization ability. This motivates our introduction of the Haldane Bundles dataset.

\section{The Haldane Bundle Dataset}
\label{sect:haldane-bundles}

In this section, we describe how we construct the class of line bundles on the $2$-torus that are the datapoints of the Haldane Bundles dataset. These line bundles have a range of nice computational and theoretical properties, are easily parameterized since they are fully determined by a pair of smooth functions $G$ and $F$ on $\Torus^2$, yet provide sufficient variation to create a challenging machine learning task. 

    
     
\subsection{Constructing Lots of Line Bundles}\label{sect: construction}
We begin by describing a generic and abstract algorithm for constructing line bundles on any smooth manifold $M$ and then specify to $\Torus^2$. Pick a collection of smooth maps $\psi_{\alpha}:V_{\alpha}\rightarrow \CC^{n+1}\setminus \{0\}$ for $\alpha\in A$ where $A$ is some index set such that $\{V_{\alpha}\}_{\alpha\in A}$ covers the space $M$. The $\{\psi_\alpha\}_{\alpha \in A}$ induce smooth maps $\{\Psi_\alpha\}_{\alpha \in A}$ from each $V_\alpha$ to complex projective space $\CCP^n$, a smooth $2n$-dimensional manifold where each point corresponds to a line in $\CC^{n+1}$. This connection is defined by $\Psi_{\alpha}(p) = [\psi_{\alpha}(p)]$ where $[\psi_{\alpha}(p)]$ means the unique line traveling through the point $\psi_{\alpha}(p)$ and the origin.

The functions $\{\Psi_\alpha\}_{\alpha \in A}$ each independently assign lines to the points in their respective domains $\{V_\alpha\}_{\alpha \in A}$ of $M$. To define a single consistent assignment on $M$, we need to be able to glue the $\{\Psi_\alpha\}_{\alpha \in A}$ together. When $\{\Psi_\alpha\}_{\alpha \in A}$ are sufficiently consistent, we can do this by introducing some gluing maps $\tau_{\alpha,\beta}: V_{\alpha}\cap V_{\beta}\rightarrow \CC^*$ such that $\tau_{\alpha,\beta}(p) \psi_{\beta}(p) = \psi_{\alpha}(p)$ for all $p\in V_{\alpha}\cap V_{\beta}$ and for all $\alpha,\beta\in A$.
These gluing maps ensure that we can consistently translate between $\Psi_\alpha$ to get a well-defined line bundle where pairs of $\Psi_\alpha$ are both defined. Taken together, this construction gives a line bundle that is well-defined on all of $M$. Note that the existence of $\tau_{\alpha,\beta}$ is impossible if either $\psi_{\beta}(p)$ or $\psi_{\beta}(p)$ is zero and the other is not (this will motivate the construction in the next paragraph).

We construct our Haldane Bundles following this method. Let $G:M\rightarrow \RR$, $F:M\rightarrow \CC$ be smooth maps with no common zeros, and define $R,R^{\dagger}:M\rightarrow \RR$ by $R(p) = G(p) + \sqrt{G(p)^2 + F(p)\overline{F(p)}}$ and $R^{\dagger}(p) = G(p) - \sqrt{G(p)^2 + F(p)\overline{F(p)}}$. Let $\psi:V\rightarrow \CC^2\setminus\{0\}$, $\psi^{\dagger}:V^{\dagger}\rightarrow \CC^2\setminus\{0\}$ be defined by 
\begin{align}\label{eq:fourier-poly}
\small
&\psi(p) = \begin{bmatrix}
R(p) \\ F(p)
\end{bmatrix} &\psi^{\dagger}(p) = \begin{bmatrix}-\overline{F(p)} \\ R^{\dagger}(p)  \end{bmatrix}
\end{align}
where $V$ and $V^{\dagger}$ are the open subsets of $M$ where $\psi$ and $\psi^{\dagger}$ are not zero, respectively. Note that $V\cup V^{\dagger}=M$, since $F$ and $G$ have no common zeros. The smooth function $\tau:V\cap V^{\dagger}\rightarrow \CC^*$ defined as $\tau = \frac{R^{\dagger}}{F}$ is a gluing function for the pair of maps $\psi$ and $\psi^{\dagger}$ with $\tau(p)\psi(p) = \psi^{\dagger}(p)$ making the induced map $\Psi:M\rightarrow \CCP^1$, where $\Psi|_{V}=[\psi]$ and $\Psi|_{V^{\dagger}}=[\psi^{\dagger}]$, a well-defined line bundle on $M$ (note that $F$ does not vanish on $V\cap V^{\dagger}$ by construction so $\tau$ is well-defined). 

\omitt{\indent The next lemma gives us a criteria we need to make sure $V$ and $V^{\dagger}$ cover $M$ and hence ensures we obtain a line bundle on $M$.
    \begin{lemma}\label{intersection}
        Given a pair of smooth functions $G:M\rightarrow \RR$ and $F:M\rightarrow \CC$, if $Z(F)\cap Z(G)=\emptyset$, then $Z(\psi)$ and $Z(\psi)$ are disjoint and the collection $\{V,V^{\dagger}\}$ covers $M$.  Furthermore, we have the following descriptions for the zero sets of $\psi$ and $\psi^{\dagger}$:
        \begin{align*}
            &Z(\psi) = \{p\in Z(F) ~:~ G(p) <0\},
            &Z(\psi^{\dagger}) = \{p\in Z(F) ~:~ G(p)>0\}
        \end{align*}
        
    \end{lemma}
    }
We call the pair of functions $G$ and $F$ a \emph{Haldane pair}, and we will call the corresponding line bundle $L(G,F)$, the \emph{Haldane bundle} with respect to $G$ and $F$.


\begin{theorem}\label{thm: Chern number}
    For any Haldane pair $(G,F)$ on a $2$-dimensional smooth manifold $M$, the Chern number for $L(G,F)$ is computed through the following integral
        \begin{align*}
        \small
         c_1^{\#}(L(G,F)) =\frac{i}{2\pi}\left(\int_{V} d(\delta \overline{F})\wedge d(\delta F) + \int_{V^{\dagger}\setminus (\overline{V\cap V^{\dagger}})} d(\delta^{\dagger}F)\wedge d(\delta^{\dagger}\overline{F})\right)
        \end{align*}
        where $\delta = \frac{1}{\sqrt{2R(R-G)}}$ and $\delta^{\dagger} = \frac{1}{\sqrt{2R^{\dagger}(R^{\dagger}-G)}}$.
    \end{theorem}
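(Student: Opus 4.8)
The plan is to realize $L(G,F)$ as a smooth Hermitian line bundle carrying a canonical unitary connection, compute the curvature of that connection explicitly in the two trivializing patches $V$ and $V^{\dagger}$, and then invoke Chern--Weil theory. Concretely, $L(G,F)$ is the rank-one subbundle of the trivial bundle $M\times\CC^2$ whose fiber over $p\in V$ is the line $\CC\,\psi(p)$ and over $p\in V^{\dagger}$ the line $\CC\,\psi^{\dagger}(p)$; these agree on $V\cap V^{\dagger}$ since $\tau\psi=\psi^{\dagger}$, and because the Haldane pair has no common zeros the functions $R,R^{\dagger}$ (hence $\psi,\psi^{\dagger}$) are smooth on all of $M$ and nowhere zero on their respective domains, so this is a smooth line bundle — it is exactly $\Psi^{*}\mathcal{O}(-1)$ for the tautological bundle on $\CC\PP^1$. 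Give it the metric restricted from the standard Hermitian metric on $\CC^2$ and the connection $\nabla$ obtained by orthogonally projecting the flat connection $d$ of $M\times\CC^2$ onto the subbundle. This $\nabla$ is a smooth unitary connection, so its curvature $\Omega$ is a globally defined smooth $2$-form and Chern--Weil gives $c_1^{\#}(L(G,F))=\frac{i}{2\pi}\int_M\Omega$, matching $c_1^{\#}(L)=\int_M c_1(L)$ from the earlier discussion.

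Next I would compute $\Omega$ in each patch. The key algebraic identity is $\|\psi\|^2=R^2+F\overline{F}=2R(R-G)$, which follows at once from $R-G=\sqrt{G^2+F\overline{F}}$ (and likewise $\|\psi^{\dagger}\|^2=2R^{\dagger}(R^{\dagger}-G)$); this is precisely what makes $e:=\delta\psi=(\delta R,\delta F)$ a \emph{unit} section of $L(G,F)$ over $V$, with $\delta R$ real and $(\delta R)^2+|\delta F|^2=1$. In the frame $e$ the connection $1$-form is $A=\langle e,de\rangle=\delta R\,d(\delta R)+\overline{\delta F}\,d(\delta F)$; rewriting $\delta R\,d(\delta R)=-\tfrac{1}{2}\,d(|\delta F|^2)$ collapses this to the purely imaginary form $A=\tfrac{1}{2}\big(\overline{\delta F}\,d(\delta F)-\delta F\,d(\overline{\delta F})\big)$, so $\Omega|_{V}=dA=d(\delta\overline{F})\wedge d(\delta F)$. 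Repeating the computation over $V^{\dagger}$ with the unit section $e^{\dagger}=\delta^{\dagger}\psi^{\dagger}=(-\delta^{\dagger}\overline{F},\,\delta^{\dagger}R^{\dagger})$ — where now the \emph{real} slot is the second coordinate rather than the first — yields, after the same bookkeeping, $\Omega|_{V^{\dagger}}=d(\delta^{\dagger}F)\wedge d(\delta^{\dagger}\overline{F})$; the swap of which of $F,\overline{F}$ appears first in the wedge is exactly the effect of the real component having moved from slot one to slot two. Agreement of these two expressions on $V\cap V^{\dagger}$ is automatic since $\Omega$ is a single global form.

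Finally I would assemble the integral: since $V\cup V^{\dagger}=M$ and $Z(\psi)\cap Z(\psi^{\dagger})=\emptyset$, the sets $V$ and $W:=V^{\dagger}\setminus\overline{V\cap V^{\dagger}}$ are disjoint and $M\setminus(V\cup W)\subseteq Z(\psi)$ has measure zero, so $\int_M\Omega=\int_V\Omega+\int_W\Omega$, and substituting the two patch formulas from the previous step produces the stated expression (in the generic case $Z(F)$ has empty interior, $W$ is empty, and only the first integral contributes). The step I expect to be the main obstacle is this last assembly: the local primitives $\delta F$ and $\delta^{\dagger}F$ are genuinely singular along $Z(\psi)$ and $Z(\psi^{\dagger})$ — there $\delta F\to F/|F|$, which has nontrivial winding and does not extend continuously across $\{F=0\}$ — so one must check that $\int_V d(\delta\overline{F})\wedge d(\delta F)$ and $\int_W d(\delta^{\dagger}F)\wedge d(\delta^{\dagger}\overline{F})$ are legitimate and sum to $\int_M\Omega$. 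This is safe because $\Omega$ itself is a smooth global form (near $Z(\psi)$ one trivializes by $\psi^{\dagger}$, not $\psi$) and the bad loci are negligible; making that precise, together with pinning down orientation and sign conventions so the Chern--Weil constant is exactly $\frac{i}{2\pi}$, is the only nonroutine part — the curvature computations themselves, while a little fiddly, are direct.
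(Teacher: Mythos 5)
Your proposal is correct and follows essentially the same route as the paper's proof: both realize $L(G,F)$ as the pullback of the tautological bundle, take the unitary connection $\langle e,de\rangle$ in the normalized frames $\delta\psi$ and $\delta^{\dagger}\psi^{\dagger}$ (the paper's $\omega_{\alpha}=\sum_r\overline{\psi_{\alpha}^r}\,d\psi_{\alpha}^r$ is your projected flat connection), observe that the real components contribute nothing to $d\omega$, and apply Chern--Weil with the same decomposition of $M$ into $V$ and $V^{\dagger}\setminus\overline{V\cap V^{\dagger}}$. Your extra care about the singular behavior of the local primitives along $Z(\psi)$, $Z(\psi^{\dagger})$ and the measure-zero assembly step is a point the paper simply glosses over.
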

    
    We provide a proof of Theorem \ref{thm: Chern number} in Section \ref{appendix:proof_thm1} of the Appendix.  

   
    
    \subsection{Over the Torus}\label{sect:torus}
We can realize $\Torus^2$ as the product space $S^1 \times S^1$ and hence to find a function defined on $\Torus^2$ it suffices to find a function of two variables that is doubly periodic. We therefore use Fourier polynomials to construct our Haldane pair. We call a smooth function $f:\Torus^2\rightarrow\CC$ a \emph{complex-valued Fourier polynomial} if for all $p\in\Torus^2$ it takes the form
        \begin{align*}
            f(p) = \frac{1}{2\pi}\sum_{k\in\ZZ^2} c_k e^{2\pi i (k\cdot p)}
        \end{align*}
        where all but a finite number of the $c_k\in\CC$ are zero and $(k\cdot p)$ is the ordinary dot product. Similarly, we call a function $g:\Torus^2\rightarrow \RR$ a real-valued Fourier polynomial if for all $p\in \Torus^2$ we have
        \begin{align*}
            g(p) = \frac{1}{2\pi} \sum_{k\in\ZZ^2} (a_k\cos(2\pi (k\cdot p)) + b_k \sin(2\pi (k\cdot p)))
        \end{align*}
        where all but a finite number of the $a_k\in\RR$ and $b_k\in\RR$ are zero.
        
        Because the torus is equivalent to the product $S^1 \times S^1$, we can represent real-valued functions $f$ on the torus as $n \times n$ $2$-dimensional arrays (generalized images) where $f(i,j) = f(i + n, j + n)$. Note that the condition that makes this different from standard images is that the array "wraps around" on both axes, so that the $(n+1)$th column (respectively row) is equal to the $1$st column (resp. row). This imposes a constraint on the boundary of "images" representing features on the torus.
        
        We can approximate the Chern number on the torus using the formula \ref{thm: Chern number} by partitioning the torus into small enough squares and approximating the integrand on these small patches.  This is possible, since in local coordinates the integrand in the integral is just a sum and product of values of $F,\delta,\delta^{\dagger}$ and partial derivatives of these.  Since our functions are Fourier polynomials, we can compute these easily on a GPU by using discrete convolutions with the coefficients.  For more information on this see \ref{appendix:computation} in the Appendix.
        
        The distribution of Chern numbers that one gets depends on the degree of Fourier polynomials that are used. These distributions are visualized in Figure \ref{fig:chern-distribution} for some small degrees. The time it took us to compute Chern number as a function of Fourier polynomial degree and resolution (of the grid on the torus) on a single GPU (Nvidia Tesla p100 @ 16GB) is shown in Figure \ref{fig:my_label}.

\begin{figure}[h!]
    \centering
    \includegraphics[scale=0.5]{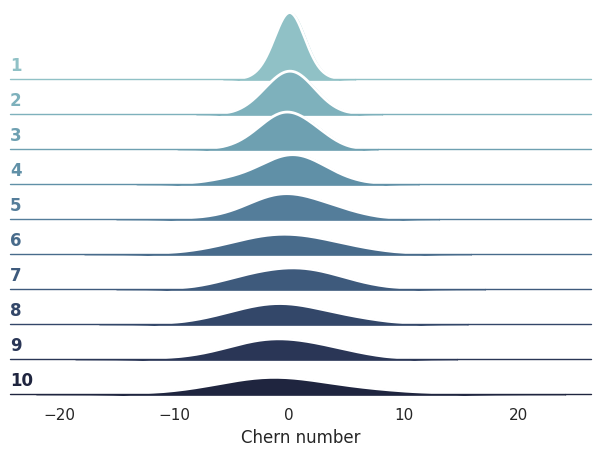}
    \caption{The distribution of Chern numbers for different degrees of Fourier polynomials. The vertical axis is the maximum degree of the polynomials.}
    \label{fig:chern-distribution}
\end{figure}
\begin{figure}
    \centering
    \includegraphics[scale=0.5]{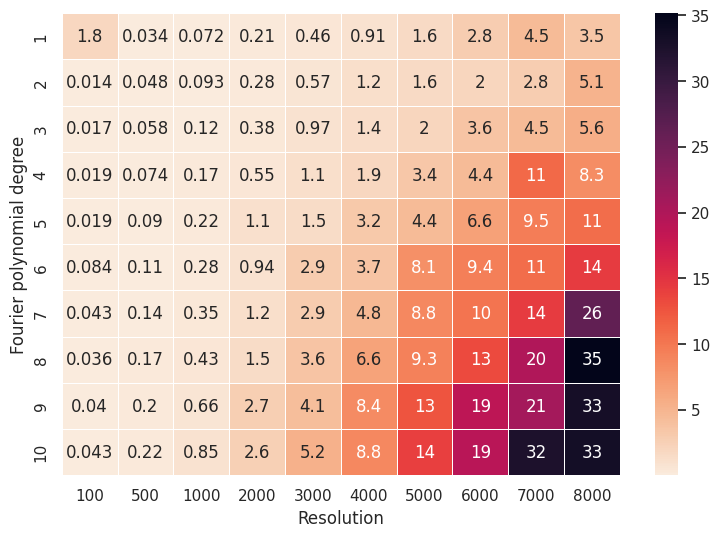}
    \caption{Time to compute the Chern number depending on the max degree of the polynomials and the partition size of the torus, in seconds.}
    \label{fig:my_label}
\end{figure}
 
    \omitt{
    \indent To construct our line bundle over a smooth manifold $M$, we need the following datum: two smooth maps $\psi:V\rightarrow \CCP^1$ and $\psi^{\dagger}:V^{\dagger}\rightarrow\CCP^1$ and a smooth transition' map $\tau:V\cap V^{\dagger}\rightarrow \CC$ such that $\tau(p)\psi(p) = \psi^{\dagger}(p)$ for all points $p$ in $V\cap V^{\dagger}$ and such that $\{V,V^{\dagger}\}$ covers $M$ as in section \ref{sect:vector-bundles}.
    \\
    \indent Pick a pair of smooth maps $G:M\rightarrow \RR$ and $F:M\rightarrow \CC$ and define $R,R^{\dagger}:M\rightarrow \RR$ such that
    \begin{align}
        R(p) = G(p) + \sqrt{G(p)^2 + F(p)\overline{F(p)}}\\
        R^{\dagger}(p) = G(p) - \sqrt{G(p)^2 + F(p) \overline{F(p)}}
    \end{align}
    which has the important property $RR^{\dagger}=-|F|^2$.  Construct two maps $\psi,\psi^{\dagger}:M\rightarrow \CC^2$
    where
    \begin{align}
        \psi(p) = \begin{bmatrix}
            R(p)\\
            F(p)
        \end{bmatrix}
        \\
        \psi^{\dagger}(p) = \begin{bmatrix}
        -\overline{F(p)}\\
        R^{\dagger}(p)
        \end{bmatrix}.
    \end{align}
    Since we do not want to these vectors to be zero, let $V = M\setminus Z(\psi)$, where $Z(\psi)$ is the set of points in $M$ that make $\psi$ zero, and similarly let $V^{\dagger} = M\setminus Z(\psi^{\dagger})$.  We can restrict our functions $\psi$ and $\psi^{\dagger}$ to $\psi:V\rightarrow \CC^2\setminus\{0\}$ and $\psi^{\dagger}:V^{\dagger}\rightarrow \CC^2\setminus\{0\}$.   For some context of where these vectors are coming from, the vector function $\psi$ is an eigenvector for the Hermitian matrix 
    \begin{align}
        H(p) = \begin{bmatrix}
        G(p) & \overline{F(p)}\\
        F(p) & -G(p)
        \end{bmatrix}
    \end{align}
    for the eigenvalue $\sqrt{G(p)^2 + F(p)\overline{F(p)}}$ on the open subset $V$.  Similarly, $\psi^{\dagger}$ is an eigenvector for $H(p)$ for the same eigenvalue above on $V^{\dagger}$.  This is where one gets the eigenvectors for the Haldane model to construct the corresponding line bundle.  
    \\
    \indent In general, $\psi$ and $\psi^{\dagger}$ will be different functions, but on the intersection $V\cap V^{\dagger}$ we can transform between these two functions using the smooth function $\tau:V\cap V^{\dagger}\rightarrow \CC\setminus\{0\}$ defined as
    \begin{align}
        \tau(p) = \frac{R^{\dagger}}{F}
    \end{align}
    with $\tau(p)\psi(p) = \psi^{\dagger}(p).$  This almost gives us our datum to construct a line bundle over $M$.  The last property we need to satisfy is that the two open subsets $V$ and $V^{\dagger}$ cover $M$.  
    \\
    \indent In general, if $G$ and $F$ have common zeros, then it is impossible for $V$ and $V^{\dagger}$ to cover $M$, but if we avoid this then we can cover the whole smooth manifold $M$.  The next lemma gives the criteria and some of the consequences of this.
    \begin{lemma}\label{intersection}
        Given a pair of smooth functions $G:M\rightarrow \RR$ and $F:M\rightarrow \CC$, if $Z(F)\cap Z(G)=\emptyset$, then $Z(\psi)$ and $Z(\psi)$ are disjoint and the collection $\{V,V^{\dagger}\}$ covers $M$.  Furthermore, we have the following descriptions for the zero sets of $\psi$ and $\psi^{\dagger}$:
        \begin{align}
            Z(\psi) = \{p\in Z(F) ~:~ G(p) <0\}\\
            Z(\psi^{\dagger}) = \{p\in Z(F) ~:~ G(p)>0\}
        \end{align}
        
    \end{lemma}
    \begin{proof}
    \indent If $p\in Z(\psi)\cap Z(\psi^{\dagger})$, then we have
    \begin{align}
        R(p) = G(p) + |G(p)|=0\\
        R^{\dagger}(p) = G(p) - |G(p)|=0
    \end{align}
    which shows that $G(p)=0$ and $F(p)=0$, which is a contradiction.  This shows that the two zero sets are disjoint.
    \\
    \indent For the description, this is a easy consequence of the equations $G(p) + |G(p)|$ and $G(p)-|G(p)|$.  
    \end{proof}
    \indent Given two smooth maps $G:M\rightarrow \RR$ and $F:M\rightarrow \CC$ with $Z(F)\cap Z(G)=\emptyset$, we will call the pair $(G,F)$ a \emph{Haldane pair}, and call the corresponding line bundle $L(G,F)$ the \emph {Haldane bundle} with respect to $G$ and $F$.
    \\
    \indent Lemma \ref{intersection} above might be very simple, but it is extremely crucial in  the computation of the Chern number of a Haldane line bundle in a computer.  It helps in the ambiguity of which integral we use on each little patch, which is essentially dependent on where $G$ is negative or positive.  Specifically, if we pick a point $p\in M$ such that $G(p)=0$, then by our condition for $(G,F)$ to be a Haldane pair, $\Psi$ and $\Psi^{\dagger}$ are never zero in a small enough neighborhood of $p$ and hence we can use any of the two curvature forms on the torus as in \ref{integral}.  On the other hand, if $G(p)>0$ (or $G(p)<0$), then there is a small enough neighborhood around $p$ that is contained in $V$ (in $V^{\dagger}$) and hence we can use the curvature on $V$ (on $V^{\dagger}$).

    \subsection{Over the Torus}
        \indent To construct our dataset of Haldane bundles over the torus, we will use the powerful class of smooth functions, Fourier polynomials. We call a smooth function $f:\Torus^2\rightarrow\CC$ a \emph{complex-valued Fourier polynomial} if for all $p\in\Torus^2$ we have
        \begin{align} \label{eq:f-poly}
            f(p) = \frac{1}{2\pi}\sum_{k\in\ZZ^2} c_k e^{2\pi i (k\cdot p)} 
        \end{align}
        where all but a finite number of the $c_k$ are zero and $(k\cdot p)$ is the ordinary dot product. Similarly, we call a function $g:\Torus^2\rightarrow \RR$ a real-valued Fourier polynomial if for all $p\in \Torus^2$ we have
        \begin{align} \label{eq:g-poly}
            g(p) = \frac{1}{2\pi} \sum_{k\in\ZZ^2} (a_k\cos(2\pi (k\cdot p)) + b_k \sin(2\pi (k\cdot p))) 
        \end{align}
        where all but a finite number of the $a_k$ and $b_k$ are zero.
        \indent These type of functions are useful in that we can fully characterize them using just the coefficients and it makes it easy to compute derivatives for them.

        {\color{red} This is most likely not needed Let $g_{\text{neg}},g_{\text{pos}},f_{\text{neg}},f_{\text{pos}}\in\ZZ_{\geq 0}$ be the positive integers corresponding to the following properties for the coefficients of $g$ and $f$, where we are using the product ordering on $\ZZ^2$ (i.e. $(a,b)<(c,d)$ if and only if $a<c$ and $b<d$):
        \begin{itemize}
            \item $g_{\text{neg}}$ and $g_{\text{pos}}$ are the minimal positive integers such that $a_k=b_k=0$ for all\\ $k<(-g_{\text{neg}},-g_{\text{neg}})$ and $(g_{\text{pos}},g_{\text{pos}})<k$,
            \item $f_{\text{neg}}$ and $f_{\text{pos}}$ are the minimal positive integers such that $c_k=0$ for all $k<(-f_{\text{neg}},-f_{\text{neg}})$ and $(f_{\text{pos}},f_{\text{pos}})<k$.
        \end{itemize}
          Let $deg_g =g_{\text{neg}}+g_{\text{pos}}+1$ and this lets us encode the coefficients of $g$ as two  $deg_g\times deg_g$ square matrix, one for $(a_k)$ and the other for $(b_k)$ with the coefficient corresponding to $k=(0,0)$ index is at the $(g_{\text{neg}},g_{\text{neg})}$ position of the matrix.  Similarly, we can define $deg_f=f_{\text{neg}}+f_{\text{pos}}+1$ and obtain a $deg_f\times deg_f$ square matrix with the coefficients $c_k$.  
          }
        \\
        \indent Given a Haldane pair $(G,F)$, where $G$ is a real-valued Fourier polynomial and $F$ is a complex-valued Fourier polynomial, we can calculate the Chern number for these type of line bundles using the following theorem.  We do not go through the process of constructing this curvature, but one picks a nice connection on the line bundle and and take the exterior derivative to construct the curvature form.
        \\
        \indent For the notation, recall that if we have smooth maps $f,g:U\rightarrow \RR$ (or even to $\CC$) for some open subset $U$ of $\RR^2$, the Jacobian  of $f$ and $g$ is $\text{Jac}(f,g)=\left(f_x g_y - f_y g_x\right)$.
        \begin{theorem}\label{integral}
        \indent Given a Haldane pair $(G,F)$ consisting of Fourier polynomials, the Chern number for $L(G,F)$ is computed using the following integral
        \begin{align}
            c_1^{\#}(L(G,F)) &= \frac{i}{2\pi}( \int_{V}\left(\delta^2\text{Jac}(\overline{F},F) + \delta F \text{Jac}(\overline{F},\delta)  + \delta \overline{F}\text{Jac}(\delta,F)\right)dxdy\\
            &+ \int_{V^{\dagger}\setminus \overline{V}}\left((\delta^{\dagger})^2\text{Jac}(F,\overline{F} + \delta^{\dagger}\overline{F}\text{Jac}(F,\delta^{\dagger} + \delta^{\dagger}F \text{Jac}(\delta^{\dagger},\overline{F})\right)dxdy
        \end{align}
        where $\delta = \frac{1}{\sqrt{2R(R-G)}}$ and $\delta^{\dagger} = \frac{1}{\sqrt{2R^{\dagger}(R^{\dagger}-G)}}$.
        \end{theorem}
        \indent We can easily compute this in a computer by partitioning our torus in to small enough squares, and using fact that Fourier polynomials are fully determined by the coefficients so that we can use discrete convolutions to find the values of the integrands in a GPU.  See figure \ref{fig:convergence 3} for exmaple of convergence of the Chern numbers as we increase the partition size of the torus.
        \begin{figure}[h!]
    \centering
    \includegraphics[scale=0.5]{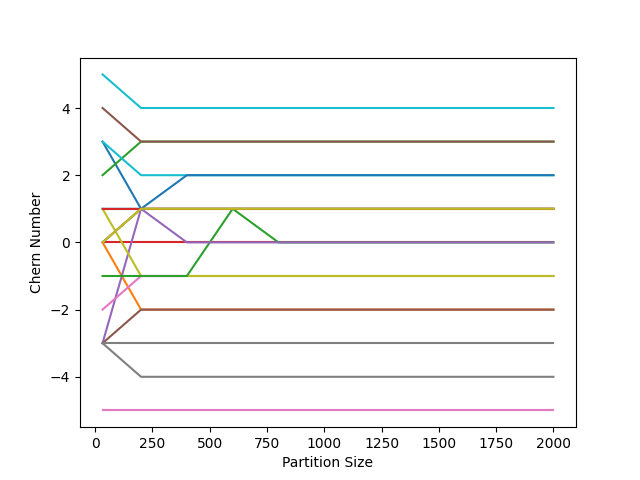}
    \caption{Convergence of Chern number calculation for $G_{\text{pos}}=G_{\text{neg}}=F_{\text{pos}}=F_{\text{neg}}=3$}
    \label{fig:convergence 3}
\end{figure}
       \indent For low degrees, we do have a fast convergence for the integral, but as we increase our degree, the convergence needs a larger partition size of the torus to converge to the Chern number.

\subsection{properties}
\indent For this section, we will briefly talk about the some of the properties this class of Haldane bundles on the torus posses.  Since the underlying functions are Fourier polynomials, there are a lot of knobs one can tweak to produce different examples and obtain any Chern number we want.  In essence, every line bundle over the torus is isomorphic to a Haldane bundle constructed above, which gives us a large uncountable subset of the collection of line bundles that intersects each of the isomorphism classes at least once.
\\
\indent Let $(G,F)$ be a Haldane pair of Fourier polynomials.  The various parameters we can change are $G_{\text{neg}}, G_{\text{pos}}, F_{\text{neg}},F_{\text{pos}}$,
which will bound the possible chern numbers based on these degrees.  See for example figure \ref{fig:possible Chern}, where we fix $G_{\text{neg}},G_{\text{pos}}$, and let $F_{\text{pos}}=F_{\text{neg}}$ vary.  In this image, we can see as we increase $F_{\text{pos}}=F_{\text{neg}}$, the possible ranges of Chern number increases, but it bounded by (something).

\begin{figure}
    \centering
    \includegraphics[scale=0.5]{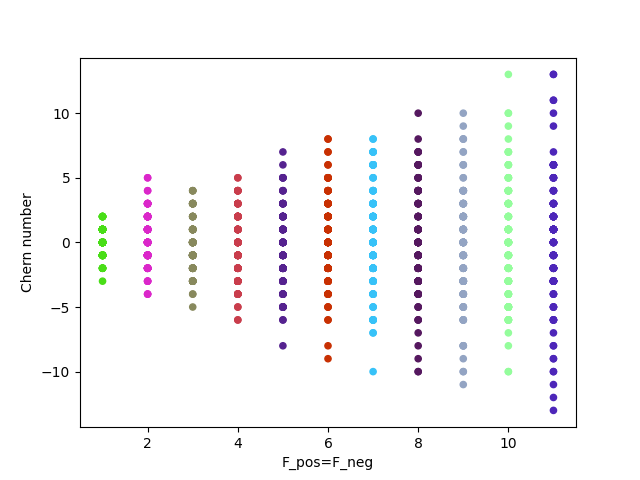}
    \caption{Possible Chern number as $F_{\text{pos}}=F_{\text{neg}}$ increases while the others are fixed }
    \label{fig:possible Chern}
\end{figure}
}

    
\omitt{
\begin{figure}[h!]
    \centering

\tikzset{every picture/.style={line width=0.75pt}} 

\begin{tikzpicture}[x=0.4pt,y=0.4pt,yscale=-1,xscale=1]

\draw  [draw opacity=0] (135.07,46.25) .. controls (137.42,32.15) and (203.24,20.45) .. (284.17,19.94) .. controls (366.64,19.42) and (433.57,30.72) .. (433.66,45.18) .. controls (433.66,45.44) and (433.64,45.7) .. (433.6,45.96) -- (284.33,46.12) -- cycle ; \draw   (135.07,46.25) .. controls (137.42,32.15) and (203.24,20.45) .. (284.17,19.94) .. controls (366.64,19.42) and (433.57,30.72) .. (433.66,45.18) .. controls (433.66,45.44) and (433.64,45.7) .. (433.6,45.96) ;  
\draw  [draw opacity=0][dash pattern={on 4.5pt off 4.5pt}] (136.07,153.25) .. controls (138.42,139.15) and (204.24,127.45) .. (285.17,126.94) .. controls (367.64,126.42) and (434.57,137.72) .. (434.66,152.18) .. controls (434.66,152.44) and (434.64,152.7) .. (434.6,152.96) -- (285.33,153.12) -- cycle ; \draw  [dash pattern={on 4.5pt off 4.5pt}] (136.07,153.25) .. controls (138.42,139.15) and (204.24,127.45) .. (285.17,126.94) .. controls (367.64,126.42) and (434.57,137.72) .. (434.66,152.18) .. controls (434.66,152.44) and (434.64,152.7) .. (434.6,152.96) ;  
\draw    (136.07,153.25) .. controls (184.33,183.12) and (404.33,93.12) .. (433.6,45.96) ;
\draw    (135.07,46.25) .. controls (136.33,88.12) and (374.33,183.12) .. (434.6,152.96) ;
\draw    (135.07,46.25) -- (136.07,153.25) ;
\draw    (433.6,45.96) -- (434.6,152.96) ;
\draw    (201,24.12) -- (200.33,99.12) ;
\draw  [dash pattern={on 4.5pt off 4.5pt}]  (200.33,99.12) -- (200.33,131.12) ;
\draw    (355,22.12) -- (354.33,100.12) ;
\draw  [dash pattern={on 4.5pt off 4.5pt}]  (354.33,100.12) -- (354.33,132.12) ;
\draw   (135.67,254.45) .. controls (135.67,241.19) and (199.99,230.45) .. (279.33,230.45) .. controls (358.68,230.45) and (423,241.19) .. (423,254.45) .. controls (423,267.7) and (358.68,278.45) .. (279.33,278.45) .. controls (199.99,278.45) and (135.67,267.7) .. (135.67,254.45) -- cycle ;
\draw    (273.33,168.12) -- (273.33,217.12) ;
\draw [shift={(273.33,219.12)}, rotate = 270] [color={rgb, 255:red, 0; green, 0; blue, 0 }  ][line width=0.75]    (10.93,-3.29) .. controls (6.95,-1.4) and (3.31,-0.3) .. (0,0) .. controls (3.31,0.3) and (6.95,1.4) .. (10.93,3.29)   ;
\draw  [fill={rgb, 255:red, 0; green, 0; blue, 0 }  ,fill opacity=1 ] (357.17,236.62) .. controls (357.17,233.95) and (359.33,231.78) .. (362,231.78) .. controls (364.67,231.78) and (366.83,233.95) .. (366.83,236.62) .. controls (366.83,239.28) and (364.67,241.45) .. (362,241.45) .. controls (359.33,241.45) and (357.17,239.28) .. (357.17,236.62) -- cycle ;
\draw [color={rgb, 255:red, 208; green, 2; blue, 27 }  ,draw opacity=1 ]   (135.57,99.75) .. controls (130.33,128.67) and (234.33,140.67) .. (280.33,132.67) ;
\draw [color={rgb, 255:red, 208; green, 2; blue, 27 }  ,draw opacity=1 ]   (135.57,99.75) .. controls (172.33,63.67) and (400.33,65.67) .. (434.1,99.46) ;
\draw [color={rgb, 255:red, 208; green, 2; blue, 27 }  ,draw opacity=1 ]   (434.1,99.46) .. controls (428.86,128.38) and (235.33,137.67) .. (280.33,132.67) ;
\draw  [fill={rgb, 255:red, 0; green, 0; blue, 0 }  ,fill opacity=1 ] (274.67,46.12) .. controls (274.67,43.45) and (276.83,41.28) .. (279.5,41.28) .. controls (282.17,41.28) and (284.33,43.45) .. (284.33,46.12) .. controls (284.33,48.78) and (282.17,50.95) .. (279.5,50.95) .. controls (276.83,50.95) and (274.67,48.78) .. (274.67,46.12) -- cycle ;
\draw [color={rgb, 255:red, 74; green, 144; blue, 226 }  ,draw opacity=1 ]   (136.33,76.67) .. controls (176.33,46.67) and (228.33,44.67) .. (279.5,46.12) ;
\draw [color={rgb, 255:red, 74; green, 144; blue, 226 }  ,draw opacity=1 ]   (280.33,132.67) .. controls (216.33,130.67) and (124.33,91.67) .. (136.33,76.67) ;
\draw [color={rgb, 255:red, 74; green, 144; blue, 226 }  ,draw opacity=1 ]   (433.33,134.67) .. controls (418.33,150.67) and (320.33,145.67) .. (280.33,132.67) ;
\draw [color={rgb, 255:red, 74; green, 144; blue, 226 }  ,draw opacity=1 ]   (433.33,134.67) .. controls (404.33,108.67) and (320.33,93.67) .. (281.1,97.46) ;
\draw  [fill={rgb, 255:red, 0; green, 0; blue, 0 }  ,fill opacity=1 ] (276.27,97.46) .. controls (276.27,94.79) and (278.43,92.63) .. (281.1,92.63) .. controls (283.77,92.63) and (285.93,94.79) .. (285.93,97.46) .. controls (285.93,100.13) and (283.77,102.29) .. (281.1,102.29) .. controls (278.43,102.29) and (276.27,100.13) .. (276.27,97.46) -- cycle ;

\draw (281,183.85) node [anchor=north west][inner sep=0.75pt]    {$\pi $};
\draw (444,245.85) node [anchor=north west][inner sep=0.75pt]    {$S^{1}$};
\draw (363,206.85) node [anchor=north west][inner sep=0.75pt]    {$p$};

\end{tikzpicture}

    \caption{The red line is the zero section}
    \label{fig:not global}
\end{figure}
}

\subsection{Is the Haldane Model Enough for an Interesting Machine Learning Task?}
\label{sect:haldane-model-ML}

\begin{figure}[h!]
    \centering
    \includegraphics[scale=0.35]{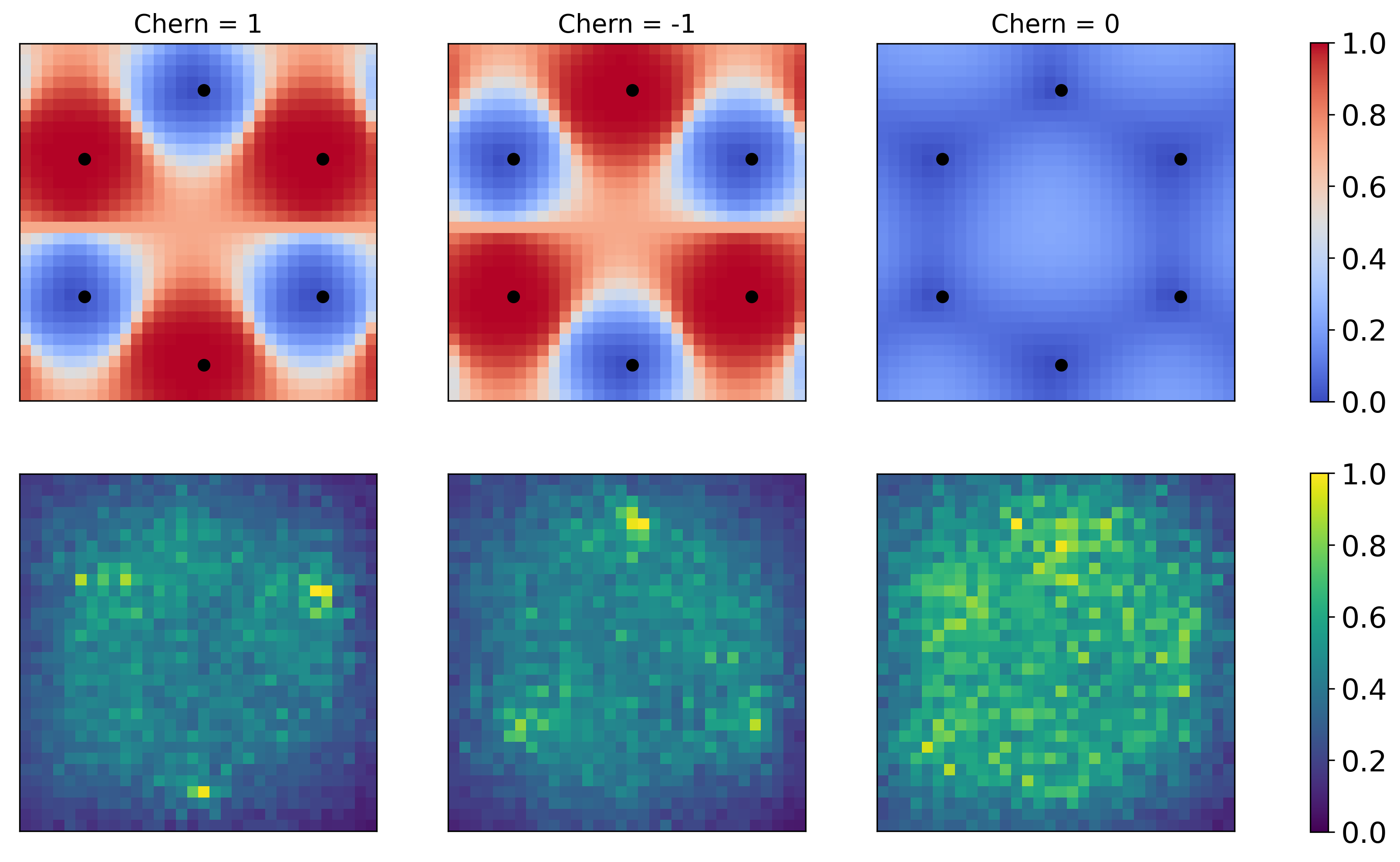}
   \caption{\textbf{Top:} A representative example of one channel of the eigenvector line bundle for the Haldane model for each of the three possible Chern numbers. The plots depict the real part of the first component of the eigenvector. The highlighted points are the zeros of $F$. \textbf{Bottom:} The saliency maps of a trained model, averaged over 100 examples of each class. Observe that for the nontrivial Chern classes, the highlighted regions are centered around the Dirac points.}
   \label{fig:haldane-model}
\end{figure}

   
Before describing the Haldane Bundles dataset described in Section \ref{sect:haldane-bundles}, it is worth asking if all this is necessary. Indeed, might we generate a dataset sufficient for developing a machine learning capability to predict characteristic classes by simulating the simpler Haldane model (described in Section~\ref{sect:haldane-model})? To test this we first fixed the form of $G$ and $F$ in the Hamiltonian in Equation~\ref{eq:haldane_hamiltonian} and varied the parameters $M, t_1,$ and $t_2$ to obtain $10$k distinct Hamiltonians. For each Hamiltonian, we sampled the corresponding eigenvector line bundles uniformly over $\Torus^2$ at a $32 \times 32$ resolution and compute the Chern number (based on the choice of $M$, $t_1$, and $t_2$). Because the line bundles are complex-valued, but high-performing off-the-shelf architectures are real-valued, we convert the complex eigenvector line bundles into separate real and imaginary channels. 

On five random $80/20$ train/test splits for this dataset, a ResNet9~\cite{he2016} achieves $99.8 \pm 0.2$\% test accuracy, averaged across all runs. As derived in ~\cite{Sticlet_2012} from the Brouwer degree formula, the Chern number for the Haldane model can be determined by the sign of $G(p)$ in Equation~\ref{eq:haldane_hamiltonian} at the zeros of $F$. The zeros, or Dirac points, are highlighted in the top row of Figure~\ref{fig:haldane-model}, demonstrating a clear pattern that distinguishes between the different Chern numbers. A simple VanillaGrad saliency map~\cite{Simonyan14a} on the trained model shows that the model exploits this pattern to make its classification. That these points are not present/predictive for arbitrary line bundles yet off-the-shelf models perform with high accuracy suggests that this dataset has too many easily learnable incidental correlations that do not really get at the heart of what Chern numbers are.

\subsection{Generating and Evaluating Machine Learning Models on the Haldane Bundle Dataset}
\begin{figure}[h!]
    \centering
    \includegraphics[scale=0.3]{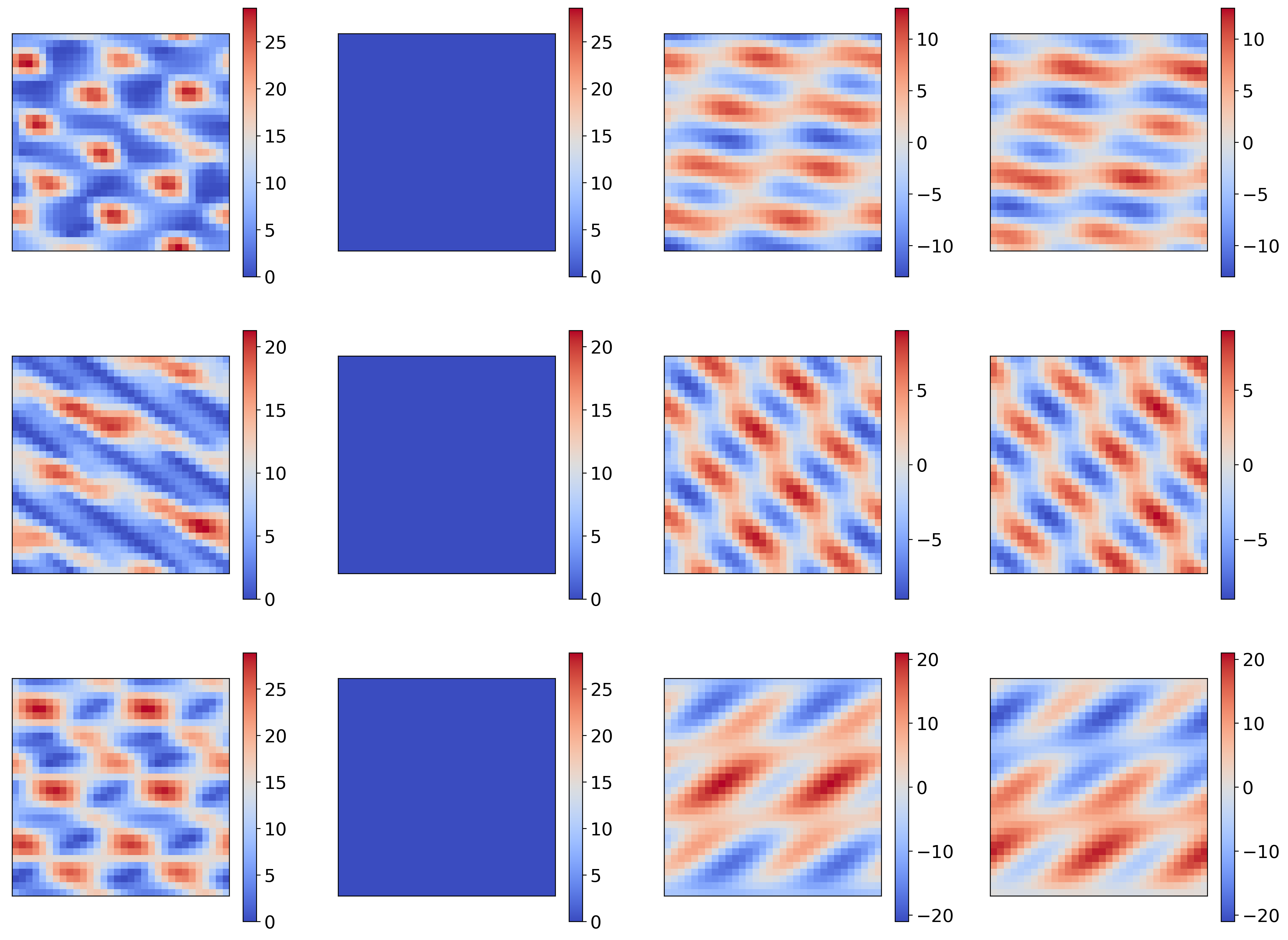}
   \caption{Three examples from the Haldane Bundle dataset with Chern numbers $1,0,$ and $-2$. From left to right the plots depict  $\text{Re}(R(p)), \text{Im}(R(p)), \text{Re}(F(p)),$ and  $\text{Im}(F(p))$ sampled over $\Torus^2$ (where $\text{Re}$ and $\text{Im}$ are the real and imaginary part of a complex number.  }
   \label{fig:random-polynomial}
\end{figure}


To create a more challenging machine learning task, we use the framework from Section \ref{sect: construction} to generate $50$k Fourier polynomial Haldane pairs $(G,F)$ by randomly sampling $a_k,b_k\in\RR$ and $c_k\in\CC$ for $k\in\ZZ^2$ such that $|k_1|\leq 4$ and $|k_2|\leq 4$. We also calculated the Chern number of each Haldane pair. We downsample the generated dataset to obtain a class balanced dataset with $2.5$k examples of each Chern number $ \in \{-2,-1,0,1,2\}$. In contrast to the dataset in the previous section where we fixed the form of $G$ and $F$ and varied parameters to obtain the dataset, here we generate a different form of $G$ and $F$ for each example. As a result, the zeros of $F$ are not spatially fixed within the representation and there is not a trivial pattern of zeros for the model to exploit. Some examples from this dataset are visualized in Figure~\ref{fig:random-polynomial}. Details for the actual calculation can be found in Section \ref{appendix:computation} of the Appendix.

We train a Vision Transformer (ViT)~\cite{Kolesnikov2021}, a ResNet9, and a ResNet9 with standard convolutions replaced by circular convolutions that wrap around to account for the fact that the line bundle is over the torus (see Section \ref{sect:torus} for a discussion of this aspect of the data representation). We report model performance averaged over five random $80/20$ train/test splits in Table~\ref{table:benchmarks}. The best test accuracy any of these standard off-the-shelf models achieved is $31.07 \%$, illustrating the more challenging nature of this dataset and providing evidence that this is a task where some research into novel architectures is warranted.

\begin{table}[h!]
\caption{Benchmarking the Haldane Bundle Dataset on various off-the-shelf models. All accuracies are averaged over five random train/test splits and model initializations. Models are trained on normalized line bundles. $95\%$ confidence intervals are included for each.}
\centering
\begin{tabular}{lrr}
\toprule
    Architecture & Train Accuracy (\%) & Test Accuracy (\%)
 \\ \hline
    ViT & $100.0  \pm  0.0$ & $29.1  \pm  1.16$ \\ 
    ResNet & $99.99  \pm  0.02$ & $26.57  \pm  1.26$  \\ 
    ResNet-C & $100.0  \pm  0.0$ & $26.81  \pm  1.03$ \\ 
    Null Classifier & $20.66  \pm  0.07$ & $21.36  \pm  0.19$  \\ 
\end{tabular}
\label{table:benchmarks}
\end{table}

\section{Topological and Geometric Priors Intrinsic to This Dataset}

We want to end by highlighting some of the interesting geometric and topological features intrinsic to the challenge of trying to predict characteristic classes that may be attractive to the geometric deep learning community and other researchers that are interested in incorporating non-trivial mathematical ideas into deep learning frameworks.

\textbf{Complex-valued deep learning:} The applications that motivated this research (topological materials) and other physics applications where characteristic classes are important all work over the complex numbers. While there is work in complex-valued deep learning, the vast majority of research (including almost all of the science of deep learning) is over $\RR$. On the other hand, Chern class prediction is fundamentally connected to the complex numbers. While our baselines incorporated complex numbers in a naive way, it would be interesting to understand whether deeper integration of complex numbers into the networks would improve performance.

\textbf{The Geometry of the Underlying Manifold:} Characteristic classes are calculated over manifolds. In this case, that manifold is the $2$-torus whose symmetries can be incorporated into a CNN with only minor modification. On the other hand, there may be cases where characteristic classes need to be calculated over more complicated spaces. One can imagine such a setting being an ideal application for recently developed methods from geometric deep learning that can capture the structure of non-Euclidean spaces.

\textbf{$U(1)$-invariance:} The data stored over any point in a vector bundle is a vector space. An $n$-dimensional $\mathbf{F}$-vector space is invariant to actions of the group $GL_n(\mathbf{F})$. That is, if one applies $g \in GL_n(\mathbf{F})$ to $V$, one gets other elements of $V$. This symmetry becomes especially important in applications, like data science, where a vector space $V$ needs to be represented concretely by, for example, a basis $b_1, \dots, b_n$. In the line bundle case of the Haldane Bundles dataset, the line $l$ is represented by a non-zero vector $v$ that $l$ passes through it. It is understood by the mathematician that for any non-zero $a \in \mathbf{F}$, $cv$ represents the same line, but this is likely not true for the model. It may be interesting to think about whether specialized augmentation or equivariant architectures might help build a more robust model.

\textbf{Global properties from local characteristics:} Topological characteristics are intrinsically global descriptors of a space. That is, they can rarely be consistently inferred if one only examines part of a space. While at its best, deep learning is powerful precisely because it can extract high-level features by aggregating low-level ones, it is also known to catastrophically fail by focusing on spurious correlations. Developing methods of nudging a model away from single feature correlations is critical in the problem of predicting characteristic classes and could be usefully applied in a host of other problems.

\section{Conclusion}

In this paper we introduced the dataset Haldane Bundles for the purpose of developing better machine learning techniques of predicting the characteristic classes of a vector bundle. While our ultimate motivation comes from the importance of characteristic classes in a range of applications (most importantly, topological materials), we also aim to show that this is an interesting problem from the perspective of incorporating sophisticated mathematical ideas into deep learning.

\begin{ack}
This research was supported by the Mathematics for Artificial Reasoning in Science (MARS) initiative at Pacific Northwest National Laboratory.
It was conducted under the Laboratory Directed Research and Development (LDRD) Program at at Pacific Northwest National Laboratory (PNNL), a multiprogram
National Laboratory operated by Battelle Memorial Institute for the U.S. Department of Energy under Contract
DE-AC05-76RL01830.
\end{ack}



\bibliographystyle{plain}
\bibliography{neurips}

\appendix


\section{Further Constructions of Haldane Bundles}

Whenever one obtains a class of line bundles, one can start taking tensor products to obtain even more varied line bundles.  Furthermore, the Chern number of a tensor product of line bundles is just the sum of the Chern numbers of each piece.  This gives us a way to calculate the Chern number of the tensor product of Haldane bundles, for which we already have a way to calculate.

First, we will generalize a Haldane pair, to a sequence to represent the individual tensor products.

\begin{definition}
A Haldane sequence is a sequence of Haldane pairs $Y=((G_1,F_1),\dots, (G_n,F_n))$, and the corresponding Line bundle is
\begin{align*}
L(Y) = L(G_1,F_1)\otimes\cdots\otimes L(G_n,F_n).
\end{align*}
\end{definition}

By theorem \ref{thm: Chern number} and the fact that Chern number is additive over tensor products, we can calculate the Chern number for any Haldane sequence $Y=((G_1,F_1),\dots, (G_n,F_n))$ as
\begin{align*}
c_1^{\#}(Y) = \sum_{j=1}^n c_1^{\#}(L(G_j,F_j)).
\end{align*}
These type of line bundles we do not use in our data sets, but we do include them in our code.  They give a way to get larger Chern number by using smaller degrees. But, to represent the data for models to learn on they can be more complicated as they have more components since these line bundles are represented by the smooth maps $\Psi:M\rightarrow \CCP^{2^n-1}$ for $n\geq 1$.   

\section{Proofs}
\subsection{Proof of Theorem \ref{thm: Chern number}}
\label{appendix:proof_thm1}

For this section, we will give a short construction of the curvature form used in computing the Chern number and give a proof to theorem \ref{thm: Chern number} .
    
\begin{definition}
A \emph{pre-line bundle} of degree $n$ is a triple $X=(V_{\alpha},\psi_{\alpha}, \tau_{\alpha,\beta})_{\alpha,\beta\in A}$ on some arbitrary index set $A$ consisting of the following properties:
\begin{itemize}
    \item the maps $\psi_{\alpha}:V_{\alpha}\rightarrow \CC^{n+1}\setminus\{0\}$ and $\tau_{\alpha,\beta}:V_{\alpha}\cap V_{\beta}\rightarrow \CC^*$ are smooth for all $\alpha,\beta\in A$,
    \item for each $\alpha,\beta\in A$, $\psi_{\alpha}(p)=\tau_{\alpha,\beta}(p)\psi_{\beta}(p)$ for all $p\in V_{\alpha}\cap V_{\beta}$ and all $\alpha,\beta\in A$,
    \item and $(V_{\alpha})_{\alpha\in A}$ covers $M$. 
\end{itemize}
\end{definition}

As in section \ref{sect: construction}, if $X=(V_{\alpha},\psi_{\alpha},\tau_{\alpha,\beta})$ is a pre-line bundle, we can build a smooth map $\Psi_X:M\rightarrow \CCP^n$ with $\Psi(p) = [\psi_{\alpha}(p)]$ for $p\in V_{\alpha}$.  Let $L_n$ be the tautological line bundle on $\CCP^n$ with
\begin{align*}
    L_n = \{(p,v)~:~ p\in \CCP^n,~~v\in l_p\}
\end{align*}
where $l_p$ is the $1$-dimensional vector space spanned by a representative of $p$.  Then the corresponding line bundle is defined as $L(X):=\Phi_X^*(L_n)$.

To build our curvature, we first need to build a connection on $L(X)$, and we need to make sure our maps $\psi_{\alpha},\tau_{\alpha,\beta}$ are normalized. We say that a pre-line bundle $X=(V_{\alpha},\psi_{\alpha},\tau_{\alpha,\beta})$ of degree $n$ is normalized if $|\psi_{\alpha}(p)|=1$ for all $p\in V_{\alpha}$ and $|\tau_{\alpha,\beta}(p)|=1$ in the complex plane for all $p\in V_{\alpha}\cap V_{\beta}$ and for all $\alpha,\beta\in A$.

\begin{lemma}
Let $X=(V_{\alpha},\Psi_{\alpha},\tau_{\alpha,\beta})_{\alpha\in A}$ be a  complex line bundle of degree $n$.  For each $V_{\alpha}$, there is a local frame $\Sigma_{\alpha}:V_{\alpha}\rightarrow L(X)$ such that on $V_{\alpha}\cap V_{\beta}$ we have the following transformation law
\begin{align*}
    \Sigma_{\alpha} = \tau_{\alpha,\beta}\Sigma_{\beta}
\end{align*}
for all $\alpha,\beta\in A$.
\end{lemma}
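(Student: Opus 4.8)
The plan is to write the frame down explicitly from the data of the pre-line bundle and then read off the transformation law directly from the gluing relation. This is essentially bookkeeping, so I expect no real obstacle.

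First I would unwind the definition $L(X) = \Psi_X^*(L_n)$: the fiber of $L(X)$ over a point $p \in V_\alpha$ is canonically $\{p\}\times l_{\Psi_X(p)}$, where $l_{\Psi_X(p)} \subset \CC^{n+1}$ is the line spanned by the representative $\psi_\alpha(p)$ of $\Psi_X(p) = [\psi_\alpha(p)]$. I would then define
\begin{align*}
\Sigma_\alpha : V_\alpha \longrightarrow L(X), \qquad \Sigma_\alpha(p) = (p, \psi_\alpha(p)),
\end{align*}
and check that it is a local frame: it is a well-defined section because $\psi_\alpha(p) \in l_{\Psi_X(p)}$ by construction of $\Psi_X$; it is smooth because $\psi_\alpha$ is smooth and the smooth structure on the pullback bundle is the one for which $p \mapsto (p, \psi_\alpha(p))$ is smooth whenever $\psi_\alpha$ is; and it is nowhere vanishing since $\psi_\alpha$ maps into $\CC^{n+1}\setminus\{0\}$. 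A nowhere-vanishing smooth section of a rank-one bundle over $V_\alpha$ is precisely a local frame, which gives the first assertion.

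For the transformation law, I would fix $\alpha,\beta \in A$ and $p \in V_\alpha\cap V_\beta$ and invoke the defining property of a pre-line bundle, $\psi_\alpha(p) = \tau_{\alpha,\beta}(p)\psi_\beta(p)$. Using the $\CC$-vector space structure on the fiber $l_{\Psi_X(p)}$ inherited from $\CC^{n+1}$,
\begin{align*}
\Sigma_\alpha(p) = (p, \psi_\alpha(p)) = \bigl(p, \tau_{\alpha,\beta}(p)\psi_\beta(p)\bigr) = \tau_{\alpha,\beta}(p)\,(p, \psi_\beta(p)) = \tau_{\alpha,\beta}(p)\,\Sigma_\beta(p),
\end{align*}
which is exactly $\Sigma_\alpha = \tau_{\alpha,\beta}\Sigma_\beta$ on $V_\alpha\cap V_\beta$.

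The only points demanding any care are making the identification of the pullback fiber with the tautological line through $\psi_\alpha(p)$ fully explicit (so that $\psi_\alpha$ itself legitimately counts as a section of $L(X)$), and verifying that $\tau_{\alpha,\beta}(p)$ acts on $\Sigma_\beta(p)$ through precisely that fiberwise vector-space structure. If one additionally assumes $X$ is normalized, the same $\Sigma_\alpha$ is a unit-length frame and the $\tau_{\alpha,\beta}$ become $U(1)$-valued, which is the form needed for the subsequent construction of the connection and curvature on $L(X)$.
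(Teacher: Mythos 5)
Your proof is correct and takes essentially the same route as the paper: both define $\Sigma_\alpha$ by feeding $\psi_\alpha(p)$ itself into the fiber of the pulled-back tautological bundle and then read the transformation law directly off the gluing relation $\psi_\alpha = \tau_{\alpha,\beta}\psi_\beta$. Your version is slightly more careful in spelling out why $\Sigma_\alpha$ is a well-defined, smooth, nowhere-vanishing section, which the paper simply asserts.
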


\begin{proof}
Let $\Psi_X:M\rightarrow \CCP^n$ be the induced smooth map from the pre-line bundle $X$. For each $\alpha\in A$, define
\begin{align*}
    \Sigma_{\alpha}(p) = (p,\Psi(p),(\psi_{\alpha}^0(p),\dots,\psi_{\alpha}^n(p) ))
\end{align*}
for $p\in V_{\alpha}$, which is a smooth frame on $V_{\alpha}$.  It is easy to see that we have
\begin{align*}
    \tau_{\alpha,\beta}\Sigma_{\beta} &= (p,\Psi(p), \tau_{\alpha,\beta}(\psi_{\beta}^0(p),\dots, \psi_{\beta}^n(p))\\
    &= (p,\Psi(p), (\psi_{\alpha}^0(p),\dots, \psi_{\alpha}^n(p)))\\
    &=\Sigma_{\alpha}.
\end{align*}
This completes the proof.
\end{proof}

Next, we will define our connection form on each of the local sections $V_{\alpha}$ and show that they transform in the expected way for connections.  For this definition, we define $\Omega_M^n(U,\CC)$ to be the space of complex differential $n$-forms on an open subset $U$ of $M$.

\begin{definition}\label{def: connection}
Let $X=(V_{\alpha},\psi_{\alpha},\tau_{\alpha,\beta})_{\alpha,\beta\in A}$ be a normalized complex line bundle of degree $n$ on a smooth manifold $M$.  For each $V_{\alpha}$, define the $1$-form $\omega_{\alpha}\in \Omega_M^1(V_{\alpha},\CC)$ as
\begin{align*}
    \omega_{\alpha} =\sum_{r=1}^n\overline{\psi_{\alpha}^r} d\psi_{\alpha}^r
\end{align*}
for all $\alpha\in A$.  
\end{definition}

\begin{lemma}
Let $X=(V_{\alpha},\Psi_{\alpha},\tau_{\alpha,\beta})_{\alpha,\beta\in A}$ be a normalized complex line bundle of degree $n$ over a  smooth manifold $M$.  The collection of one forms $\omega^B_{\alpha}$ on each local trivialization $V_{\alpha}$ defines a global connection $\nabla:\Gamma(L(X)\rightarrow \Gamma((T^*M)_{\CC}\otimes_{\CC} L(X))$ with 
\begin{align*}
    \nabla(\Sigma_{\alpha}) = \omega_{\alpha}\otimes \Sigma_{\alpha}
\end{align*}
on each open subset $V_{\alpha}$.
\end{lemma}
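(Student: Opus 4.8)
The plan is to reduce the statement to the standard gluing criterion for connections on line bundles. Recall that if $\{\Sigma_\alpha\}$ is the family of local frames from the preceding lemma, so that $\Sigma_\alpha = \tau_{\alpha,\beta}\Sigma_\beta$ on each nonempty overlap $V_\alpha\cap V_\beta$, then any local section of $L(X)$ over $V_\alpha$ can be written uniquely as $f\Sigma_\alpha$ with $f$ a smooth $\CC$-valued function, and declaring
\begin{align*}
\nabla(f\Sigma_\alpha) = df\otimes\Sigma_\alpha + f\,\omega_\alpha\otimes\Sigma_\alpha
\end{align*}
defines a $\CC$-linear operator on $\Gamma(L(X)|_{V_\alpha})$ satisfying the Leibniz rule by construction. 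These local operators agree on overlaps — and hence assemble into a global connection $\nabla$ with $\nabla\Sigma_\alpha = \omega_\alpha\otimes\Sigma_\alpha$ — if and only if the connection one-forms obey the gauge transformation law $\omega_\alpha = \omega_\beta + \tau_{\alpha,\beta}^{-1}\,d\tau_{\alpha,\beta}$ on $V_\alpha\cap V_\beta$. So the entire content of the lemma is to verify this one identity for the specific forms $\omega_\alpha = \sum_r \overline{\psi_\alpha^r}\,d\psi_\alpha^r$.

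To verify it I would substitute the pre-line-bundle relation $\psi_\alpha^r = \tau_{\alpha,\beta}\psi_\beta^r$ (valid componentwise on the overlap) and differentiate, using $d\psi_\alpha^r = (d\tau_{\alpha,\beta})\psi_\beta^r + \tau_{\alpha,\beta}\,d\psi_\beta^r$ and $\overline{\psi_\alpha^r} = \overline{\tau_{\alpha,\beta}}\;\overline{\psi_\beta^r}$. Expanding $\omega_\alpha$ collects into two terms,
\begin{align*}
\omega_\alpha = \overline{\tau_{\alpha,\beta}}\Big(\sum_r |\psi_\beta^r|^2\Big) d\tau_{\alpha,\beta} + |\tau_{\alpha,\beta}|^2 \Big(\sum_r \overline{\psi_\beta^r}\,d\psi_\beta^r\Big),
\end{align*}
and here the normalization hypothesis is exactly what is needed: $|\psi_\beta(p)| = 1$ forces $\sum_r |\psi_\beta^r|^2 \equiv 1$, while $|\tau_{\alpha,\beta}(p)| = 1$ forces $|\tau_{\alpha,\beta}|^2 \equiv 1$ and $\overline{\tau_{\alpha,\beta}} = \tau_{\alpha,\beta}^{-1}$. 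Substituting these gives $\omega_\alpha = \tau_{\alpha,\beta}^{-1}\,d\tau_{\alpha,\beta} + \omega_\beta$, which is the required transformation law; one then checks directly from the Leibniz rule that $\nabla(\tau_{\alpha,\beta}\Sigma_\beta) = (d\tau_{\alpha,\beta} + \tau_{\alpha,\beta}\omega_\beta)\otimes\Sigma_\beta = \tau_{\alpha,\beta}\,\omega_\alpha\otimes\Sigma_\beta = \omega_\alpha\otimes\Sigma_\alpha$, so the local definitions are compatible on double overlaps and glue to a global $\nabla$.

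I do not anticipate a genuine obstacle — the argument is a short gauge computation — so what I would be careful about is bookkeeping rather than ideas: keeping the Leibniz sign convention consistent; using $\overline{\tau_{\alpha,\beta}} = \tau_{\alpha,\beta}^{-1}$, which is the one place unitarity of the transition functions is genuinely used; and reading the sum defining $\omega_\alpha$ as running over all $n+1$ components $\psi_\alpha^0,\dots,\psi_\alpha^n$, so that $\sum_r |\psi_\alpha^r|^2 = |\psi_\alpha|^2 = 1$ (without this the cross term does not collapse). Well-definedness over triple overlaps is then automatic, since the value of $\nabla$ on a section is forced by its value on a local frame together with the Leibniz rule. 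If desired, one could also note in passing that each $\omega_\alpha$ is purely imaginary — differentiating $\sum_r |\psi_\alpha^r|^2 = 1$ gives $\omega_\alpha + \overline{\omega_\alpha} = 0$ — so $\nabla$ is in fact a metric connection for the standard Hermitian metric, though that is not needed for the statement.
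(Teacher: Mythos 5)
Your proof is correct and follows essentially the same route as the paper's: expand $\omega_\alpha = \sum_r \overline{\psi_\alpha^r}\,d\psi_\alpha^r$ using $\psi_\alpha^r = \tau_{\alpha,\beta}\psi_\beta^r$, and use the normalization hypotheses $\sum_r|\psi_\beta^r|^2=1$ and $|\tau_{\alpha,\beta}|=1$ to collapse the cross terms into $\omega_\beta + \overline{\tau}_{\alpha,\beta}\,d\tau_{\alpha,\beta} = \omega_\beta + \tau_{\alpha,\beta}^{-1}\,d\tau_{\alpha,\beta}$. You are somewhat more explicit than the paper about why this transformation law suffices (the Leibniz-rule gluing criterion and triple-overlap consistency), and your observation that the index $r$ must range over all $n+1$ components for the normalization to apply is a fair catch of a small indexing slip in the paper's statement, but the argument is the same.
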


\begin{proof}
We need to show that the collection of $1$-forms $\omega_{\alpha}$ transform like a connection.  On $V_{\alpha}\cap V_{\beta}$, we have
\begin{align*}
    \omega_{\alpha} &= \sum_{r=1}^n \overline{\psi_{\alpha}^r} d\psi_{\alpha}^r\\
    &= \sum_{r=1}^n \overline{\tau}_{\alpha,\beta} \overline{\psi_{\beta}^r} d(\tau_{\alpha,\beta}\psi_{\beta}^r)\\
    &=\sum_{r=1}^n \overline{\tau}_{\alpha,\beta} \overline{\psi_{\beta}^r}\tau_{\alpha,\beta} d\psi_{\beta}^r + \sum_{r=1}^n \overline{\tau}_{\alpha,\beta} \overline{\psi_{\beta}^r}\psi_{\alpha}^rd\tau_{\alpha,\beta}\\
    &=\sum_{r=1}^n \overline{\psi_{\beta}^r} d\psi_{\beta}^r + \sum_{r=1}^n (\overline{\psi_{\beta}^r}\psi_{\beta}^r ) \overline{\tau}_{\alpha,\beta} d\tau_{\alpha,\beta}\\
    &=\omega_{\beta} + \overline{\tau}_{\alpha,\beta}d\tau_{\alpha,\beta}.
\end{align*}
Multiplying both sides by $\tau_{\alpha,\beta}$ will get the result.  
\end{proof}

For each $\alpha$, we can compute $\Omega_{\alpha} = d\omega_{\alpha}\in\Omega^2_M(V_{\alpha},\CC)$, and these glue together to make a global curvature form $\Omega$ on $M$.  This shows that the curvature is of the form
\begin{align*}
    \Omega|_{V_{\alpha}} = \sum_{r=0}^n d\overline{\psi_{\alpha}^r}\wedge d\psi_{\alpha}^r
\end{align*}
for each $\alpha\in A$.  By Chern-Weil theory the curvature gives us a Chern class $c_1(X)=[\frac{i\Omega}{2\pi}]\in H_{dR}^2(M;\CC)$ and hence gives us the chern number
\begin{align*}
    c_1^{\#}(X) = \frac{i}{2\pi}\int_M\Omega 
\end{align*}
whenever $M$ is a $2$-dimensional manifold.

Now that we have our curvature form, we can use this to give the proof of theorem \ref{thm: Chern number}.

\begin{proof}[proof of theorem \ref{thm: Chern number}]
Let $(G,F)$ be a Haldane pair with our smooth maps $\psi:V\rightarrow \CC^2\setminus\{0\}$ and $\psi^{\dagger}:V^{\dagger}\rightarrow \CC^2\setminus\{0\}$ defined as in \ref{sect: construction}.  We need to normalize these vectors to use our connection in definition \ref{def: connection}, and it is not too hard to show that $|\psi| = \sqrt{2R(R-G)}$ and $|\psi^{\dagger}| = \sqrt{2R^{\dagger}(R^{\dagger}-G)}$. We define
\begin{align*}
    &\widetilde{\psi} = \delta\psi & \widetilde{\psi^{\dagger}} = \delta^{\dagger}\psi^{\dagger}
\end{align*}
with $\delta = \frac{1}{\sqrt{2R(R-G)}}$ and $\delta^{\dagger} = \frac{1}{\sqrt{2R^{\dagger}(R^{\dagger}-G})}.$. 

Now that we normalized our vectors, we can use these to compute our curvature.  On $V$, we have
\begin{align*}
    \Omega|_{V} &= d(\overline{\delta R})\wedge d(\delta R) + d(\overline{\delta F}) \wedge d(\delta F)\\
    &=d(\delta \overline{F})\wedge d(\delta F)\\
    &=\delta^2 d\overline{F}\wedge F + \delta F d\overline{F}\wedge d\delta + \delta \overline{F} d\delta \wedge dF
\end{align*}
using the fact that $R$ and $\delta$ are real and the product rule for the exterior derivative.  Similarly, we obtain
\begin{align*}
    \Omega|_{V^{\dagger}} &= d(\delta^{\dagger} F)\wedge d(\delta^{\dagger}\overline{F})\\
    &=(\delta^{\dagger})^2 dF \wedge d\overline{F} + \delta^{\dagger}Fd\delta^{\dagger}\wedge d\overline{F} + \delta^{\dagger}\overline{F} dF \wedge d\delta^{\dagger}.
\end{align*}
We can break our integral apart on the two open subset $V$ and $V^{\dagger}\setminus \overline{V\cap V^{\dagger}}$ to get the result.
\end{proof}
    
\subsection{Computation of Chern number over the Torus} 
\label{appendix:computation}
In the case where $(G,F)$ is a Haldane pair over $\Torus^2$ consisting of Fourier polynomials, we can approximate the Chern number using the formula in theorem \ref{thm: Chern number} in a very efficient manner on the GPU.  First, we need to give a description on how we save our data of Fourier polynomials in the computer, as this will be important for computation later.  Let 
\begin{align}
    F(p) = \frac{1}{2\pi}\sum_{k\in\ZZ^2} c_ke^{2\pi i(k\cdot p)}
\end{align}
for $c_k\in\CC$ where all but a finite number of them are nonzero, and let
\begin{align}
    G = \frac{1}{2\pi}\sum_{k\in\ZZ^2} \left(a_k\cos(2\pi (k\cdot p)) + b_k \sin(2\pi (k\cdot p))\right)
\end{align}
for $a_k,b_k\in\RR$, where all but a finite number of them are zero. 

Next, we need to describe some bounds to when the coefficients are zero.  Let $N_-(F)$ and $N_+(F)$ be positive integers such that $N_-(F)$ is the minimal integer such that $c_k=0$ for all $k_1<-N_-(F)$ and $k_2<-N_-(F)$, and define $N_+(F)$ to be the minimal positive integer such that $c_k=0$ for all $k_1>N_+(F)$ and $k_2>N_+(F)$.  Similarly, we can define $N_-(G)$ and $N_+(G)$ as well applied to both coefficients $a_k$ and $b_k$.  
Let $N_F=N_-(F) + N_+(F)+1$ and construct a matrix $A_F$ of size $N_F\times N_F$ with entries $(A_F)_{i,j} = c_{(i-N_-(F),j-N_-(F))}$ for $1\leq i,j\leq N_F$.  Similarly, define $N_G = N_-(G) + N_+(G)+1$ and define the matrices $B_G^t$ of size $N_G\times N_G$ for $t=1,2$ with
\begin{align}
    &(B_G^1)_{i,j} = a_{(i-N_-(G),j -N_-(G)}&
    (B_G^2)_{i,j} = b_{(i-N_-(G),j-N_-(G)}
\end{align}
for $1\leq i,j\leq N_G$.  For example, if $N_-(F)=N_+(F)=d$, then $A_F$ is of the form
\begin{align}
    \begin{bmatrix}
    c_{(-d,-d)}&\cdots & c_{(0,-d)}&\cdots & c_{(d,-d)}\\
    \vdots&\cdots & \vdots&\cdots & \vdots\\
    c_{(-d,0)}&\cdots & c_{(0,0)}&\cdots & c_{(d,0)}\\
    \vdots&\cdots & \vdots&\cdots & \vdots\\
    c_{(-d,d)}&\cdots & c_{(0,d)}&\cdots & c_{(d,d)}
    \end{bmatrix}.
\end{align}

\begin{example}
Here is a simple example, that is related to the Haldane model.  Suppose we have a Haldane pair $(G,F)$ with the Fourier polynomials
\begin{align*}
    F(p)  &= t_1e^{2\pi i (x+y)} + t_1e^{2\pi i (-y)} + t_1e^{2\pi i (-x)}\\
    G(p) &= M + 2t_2 \sin(-2\pi x + 2\pi y) + 2t_2 \sin(4\pi x + 2\pi y) + 2t_2 \sin(-2\pi x - 4\pi y)\\
\end{align*}
with parameters $t_1,t_2,M\in\RR$.  Then
then $N_-(F) = N_+(F) = 1$ and $N_-(G)=N(G,-)=2$ and we would have the following matrices for the coefficients:
\begin{align*}
    &A_F = \begin{bmatrix}
    0 & t_1 & 0\\
    t_1 & 0 & 0\\
    0 & 0 & t_1
    \end{bmatrix}
    & B_G^1 = \begin{bmatrix}
    0 & 0 & 0 & 0 & 0\\
    0 & 0 & 0 & 0 & 0\\
    0 & 0 & M & 0 & 0\\
    0 & 0 & 0 & 0 & 0\\
    0 & 0 & 0 & 0 & 0
    \end{bmatrix},~
    & B_G^2 = \begin{bmatrix}
    0 & 2t_2 & 0 & 0 & 0\\
    0 & 0 & 0 & 0 & 0\\
    0 & 0 & 0 & 0 & 0\\
    0 & 2t_2 & 0 & 0 & 2t_2\\
    0 & 0 & 0 & 0 & 0
    \end{bmatrix}.
\end{align*}
\end{example}

With our representation of our Fourier polynomials through square matrices containing their coefficients, we can start describing how we calculated the Chern number using theorem \ref{thm: Chern number}. We proceed in the usual method by partitioning our torus in to small enough squares and evaluate our $\Omega$ on each of the sampled points of each square, multiplying by the area of that square, and then taking the sum to approximate the integral.  The most expensive part of all of this is finding the values of $F$,$G$, and the partial derivatives at each of the sampled points and adding up the values.  As the partition size of the torus grows, the more computationally expensive this becomes.  To handle this, it is fruitful to calculate the values of $F$,$G$ and the partial derivatives using the parallel capabilities of GPU's.

First, partition our torus into $1/N\times 1/N$ sized squares for some large enough $N>1$, so that there are $N^2$ boxes in total, and we have sampled a point in each of these boxes, say the bottom left corner of each of them.
 Let $P=[0,\dots, i/N,\dots, (N-1)/N]$ be a vector with $N$ entries essentially representing either the first or second entry of a sampled point in the torus. Furthermore, let $K_F = [-N_-(F),\dots, 0,\dots, N_+(F)]$ and $K_G = [-N_-(G),\dots, 0,\dots, N_+(G)]$ be two vectors representing the possible entries for $k\in\ZZ^2$ where $a_k,b_k,c_k$ are possibly non-zero. We will be using these vectors with certain operations to make it possible to implement the computation on the GPU. 

Before we start the computation, we will review the Kronecker product, Kronecker sum of two vectors, and a new operation that combines these together which makes it possible to evaluate the sampled points on the Torus using the GPU.
\begin{definition}
Let $v=[v_1,\dots, v_n],h=[h_1,\dots, h_m]$ two vectors with components in an arbitrary field $\FF$.  The Kronecker product of $v$ and $h$ is the vector $w=\text{Kron}(v\otimes h)$ of length $nm$ with
\begin{align}
    w &= [v_1h_1, v_1h_2,\dots, v_1h_m,\dots, v_nh_1,\dots, v_nh_m]\\
    &= [v_1h,\dots, v_nh].
\end{align}
\end{definition}
The Kronecker product is essentially a ordered list of all possible multiplications of elements from $v$ and $h$.  Note that $Kron:\FF^n\otimes \FF^m\rightarrow \FF^{nm}$ is a linear map over $\FF$ and it is not commutative.

Next, is the Kronecker sum, which is similar to the Kronecker product, except it calculates all possible sums between two vectors and arranges it into a matrix.
\begin{definition}
Let $v=[v_1,\dots, v_n]$ and $h=[h_1,\dots, h_m]$ be two vectors with components in an arbitrary field $\FF$.  The Kronecker sum of $v$ and $h$ is the matrix $C=\text{Kron}_{\Sigma}(v,h)$ defined as follows. Let $A$ be a matrix of size $m\times n$ with each row is the vector $v$ as in
\begin{align}
A = \begin{bmatrix}
v\\
\vdots\\
v
\end{bmatrix}.
\end{align}
Similarly, let $B$ be a matrix of size $m\times n$ with each column is the vector $h^T$ as in
\begin{align}
B = \begin{bmatrix} h^T & \cdots & h^T\end{bmatrix}.
\end{align}
We define $C$ to be the $m\times n$ matrix
\begin{align}
 C = A + B & = \begin{bmatrix}
    v_1 + h_1  & v_2 + h_1 & \cdots & v_n + h_1\\
    v_1 + h_2 & v_2 + h_2 & \cdots & v_n + h_2\\
    \vdots & \vdots & \vdots & \vdots\\
    v_1 + h_m & v_2 + h_m & \cdots & v_n + h_m
    \end{bmatrix}.
\end{align}
\end{definition}
The Kronecker sum $Kron_{\Sigma}:\FF^n\times \FF^m\rightarrow Mat_{m,n}(\FF)$ is linear over $\FF$ and it is also non-commutative.

With these two operations, we define a new linear map $\text{mdot}:\FF^n\otimes\FF^n\rightarrow Mat_{nm}(\FF)$ defined for $v\in\FF^n$ and $h\in\FF^m$ as
\begin{align}
    \text{mdot}(v\otimes h) &= Kron_{\Sigma}(Kron(v,h), Kron(v,h))\\
    &=\begin{bmatrix}
    v_1h_1 + v_1h_1  & \cdots & v_1h_m + v_1h_1 & \cdots & v_nh_1 + v_1h_1  & \cdots & v_nh_m + v_1h_1\\
    v_1h_1 + v_1h_2  & \cdots & v_1h_m + v_1h_2 & \cdots & v_nh_1 + v_1h_2  & \cdots & v_nh_m + v_1h_2\\
    \vdots & \cdots & \vdots & \cdots & \vdots  & \cdots & \vdots\\
    v_1h_1 + v_1h_m & \cdots & v_1h_m + v_1h_m & \cdots & v_nh_1 + v_1h_m  & \cdots & v_nh_m + v_1h_m\\
    \vdots & \cdots & \vdots & \cdots & \vdots  & \cdots & \vdots\\
    v_1h_1 + v_nh_1 & \cdots & v_1h_m + v_nh_1 & \cdots & v_nh_1 + v_nh_1  & \cdots & v_nh_m + v_nh_1\\
    \vdots & \cdots & \vdots & \cdots & \vdots  & \cdots & \vdots\\
    v_1h_1 + v_nh_m &  \cdots & v_1h_m + v_nh_m & \cdots & v_nh_1 + v_nh_m  & \cdots & v_nh_m + v_nh_m\\
    \end{bmatrix}
\end{align}
which essentially a dot product between all possible combinations $(v_i,v_j)\cdot (h_r,h_t)$ for $1\leq i,j\leq n$ and $1\leq r,t\leq m$.  One very useful property of this product is that we can break this matrix into a $n\times n$ block matrix of the form
\begin{align}
    mdot(v,h) = \begin{bmatrix} A_{1,1} & A_{2,1} & \cdots & A_{n,1}\\
    A_{1,2} & A_{2,2} & \cdots & A_{n,2}\\
    \vdots & \vdots & \cdots & \vdots\\
    A_{1,n} & A_{2,n} & \cdots & A_{n,n}
    \end{bmatrix}
\end{align}
where
\begin{align}
    A_{i,j} = \begin{bmatrix}
    (v_i,v_j) \cdot (h_1,h_1) & (v_i,v_j)\cdot (h_2,h_1) & \cdots & (v_i,v_j)\cdot (h_m,h_1)\\
    (v_i,v_j) \cdot (h_1,h_2) & (v_i,v_j)\cdot (h_2,h_2) & \cdots & (v_i,v_j)\cdot (h_m,h_2)\\
    \vdots & \vdots & \cdots & \vdots\\
    (v_i,v_j) \cdot (h_m,h_1) & (v_i,v_j)\cdot (h_m,h_2) & \cdots & (v_i,v_j) \cdot (h_m,h_m)
    \end{bmatrix}.
\end{align}
  Note that this operation is non-commutative, and the ordering we do the product in the next section is very important.  Furthermore, since all of these operations are operations that can be done in parallel in the computer, hence $\text{mdot}$ is able to be computed in the GPU.

With our new operation $\text{mdot}$ at our disposal, we can continue our computation of our Chern number on the torus. Our goal was to compute the values of $F,G$,and the partial derivatives on the whole torus.  The first essential part of this, is to compute the matrices $\text{mdot}(P\otimes K_F)$ for $F$, $\text{mdot}(P\otimes K_G)$ for $G$, and similar products for the partial derivatives.  We will focus our attention on the values of $F$ on the torus, as it is similar for $G$ and the partial derivatives.
    
By definition $\text{mdot}(P\otimes K_F)$ is a $N\times N$ block matrix $(A_{i,j})_{i,j}$ described in the definition of $\text{mdot}$ above.  Apply the exponential function component wise to get matrix 
\begin{align}
    \exp{(2\pi i \text{mdot}(P\otimes K_F))} = \begin{bmatrix} \exp{2\pi iA_{1,1}} & \exp{2\pi iA_{2,1}} & \cdots & \exp{2\pi iA_{n,1}}\\
    \exp{2\pi iA_{1,2}} & \exp{2\pi iA_{2,2}} & \cdots & \exp{2\pi iA_{n,2}}\\
    \vdots & \vdots & \cdots & \vdots\\
    \exp{2\pi iA_{1,n}} & \exp{2\pi iA_{2,n}} & \cdots & \exp{2\pi iA_{n,n}}
    \end{bmatrix}.
\end{align}
Here, the matrix $\exp{2\pi i A_{i,j}}$ represents all possible values $e^{2\pi i (k\cdot p)}$ for a fixed point $p=(\frac{i-1}{N},\frac{j-1}{N})$ and all possible $k\in\ZZ^2$ with respect to $N_+(F)$ and $N_-(F)$.
Applying a discrete convolution with weight $A_F$, input $\exp{(2\pi i \text{mdot}(P,K_F))}$, and stride $N$ results in a $N\times N$ matrix
\begin{align}
    \begin{bmatrix}
    F(0,0) & F(\frac{1}{N},0) & \cdots & F(\frac{N-1}{N}, 0)\\
    F(0,\frac{1}{N}) & F(\frac{1}{N},\frac{1}{N}) & \cdots & F(\frac{N-1}{N},\frac{1}{N})\\
    \vdots & \vdots & \cdots & \vdots\\
    F(0,\frac{N-1}{N}) & F(\frac{1}{N},\frac{N-1}{N}) & \cdots & F(\frac{N-1}{N},\frac{N-1}{N})
    \end{bmatrix}
\end{align}
which are the values of $F$ on each of the sampled points of our partition.

With this and applying the same idea for $G$ and the partial derivatives, we get our values of $F$,$G$, and the partial derivatives on the sampled points of the torus.  We can use this to compute the integrands in the GPU and take the sum at the end to obtain an approximation to the integral.

\end{document}